\DeclareMathOperator*{\argmax}{arg\,max}
\begin{document}

\newcommand{\mwnwtie}{MWNW$^\text{tie}$}
\newcommand{\mnwtie}{MNW$^\text{tie}$}
\newcommand{\wnw}{\texttt{WNW}}
\newcommand*\circled[1]{\tikz[baseline=(char.base)]{
            \node[shape=circle,draw,inner sep=2pt] (char) {#1};}}

\title{On Maximum Weighted Nash Welfare for \\ Binary Valuations}
%
%\titlerunning{Abbreviated paper title}
% If the paper title is too long for the running head, you can set
% an abbreviated paper title here
%
\author{Warut Suksompong\inst{1} \and
Nicholas Teh\inst{2}}
\authorrunning{W. Suksompong and N. Teh}
% First names are abbreviated in the running head.
% If there are more than two authors, 'et al.' is used.
%
\institute{School of Computing, National University of Singapore, Singapore
\email{warut@comp.nus.edu.sg} \and
Department of Computer Science, University of Oxford, UK\\
\email{nicholas.teh@cs.ox.ac.uk}}
\maketitle              % typeset the header of the contribution
\begin{abstract}
We consider the problem of fairly allocating indivisible goods to agents with weights representing their entitlements.
A natural rule in this setting is the maximum weighted Nash welfare (MWNW) rule, which selects an allocation maximizing the weighted product of the agents' utilities.
We show that when agents have binary valuations, a specific version of MWNW is resource- and population-monotone, satisfies group-strategyproofness, and can be implemented in polynomial time.

\keywords{fair division  \and unequal entitlements \and Nash welfare}
\end{abstract}
%
%
%

%%% SECTION: INTRODUCTION
\section{Introduction}
\label{sec:intro}

A fundamental problem in economics is how to allocate scarce resources to interested agents in a fair manner, often referred to as \emph{fair division} \cite{brams1996fairdivision,moulin2003fairdivision}.
Applications of fair division include distributing books among public libraries, allotting supplies to districts, and dividing personnel between organizations.

An approach to fairly allocating resources that has received significant attention is the \emph{maximum Nash welfare (MNW)} rule, which chooses an allocation that maximizes the product of the agents' utilities.
MNW is \emph{scale-invariant}---scaling the utility function of an agent by a constant factor does not change the outcome of the rule---and its output satisfies \emph{Pareto-optimality}---no agent can be made better off without making another agent worse off.
Moreover, in the ubiquitous setting where the resource consists of indivisible goods, Caragiannis et al.~\cite{caragiannis2016mnw} showed that an MNW allocation is always \emph{envy-free up to one good (EF1)}, meaning that if an agent envies another agent based on their allocated bundles, then the envy can be eliminated by removing a single good from the latter agent's bundle.
This result holds when the agents have additive valuations over the goods, a common assumption in fair division, which we will also make in this paper.
The combination of EF1 and Pareto-optimality is surprisingly elusive---for example, it is not always satisfied by an allocation that maximizes the utilitarian or egalitarian welfare~\cite{caragiannis2016mnw}.
This ``unreasonable fairness'' of MNW led Caragiannis et al.~to call it the ``ultimate solution'' for the division of indivisible goods under additive valuations.

While most of the fair division literature assumes that the agents participating in the resource allocation exercise have the same entitlement to the resource, this assumption is far from the truth in many situations.
Indeed, the agents may have made different levels of investment in a joint venture and therefore deserve different shares of the resulting assets.
Alternatively, if the agents represent groups like districts or organizations, as in the aforementioned examples, larger groups are clearly entitled to a larger proportion of the resource.
As a result, a recent line of work has investigated the problem of allocating indivisible goods to agents with different \emph{weights} representing their entitlements \cite{babaioff2021fairshare,babaioff2021competitive,chakraborty2020wef,chakraborty2021picking,chakraborty2022weighted,farhadi2019wmms,scarlett2021fairness}.
A natural generalization of MNW to the weighted setting is the \emph{maximum weighted Nash welfare (MWNW)} rule, which selects an allocation maximizing the weighted product of the agents' utilities, where the weights appear in the exponents.
MWNW is Pareto-optimal by definition, and Chakraborty et al.~\cite{chakraborty2020wef} showed that it satisfies a weighted extension of EF1 called \emph{weak weighted envy-freeness up to one good (WWEF1)}.

Despite their attractive properties, MNW and MWNW are not without disadvantages.
First, MNW is not \emph{strategyproof} \cite{halpern2020binary,klaus2002strategy}, which implies that an agent can sometimes benefit by misreporting her true preferences.
Second, the rule violates two highly intuitive properties called \emph{resource-} and \emph{population-monotonicity}~\cite{chakraborty2021picking}: resource-monotonicity says that when an extra good is added, no agent should receive lower utility as a result, while population-monotonicity mandates that if an extra agent joins, no existing agent's utility should increase.
Third, finding or even approximating the maximum Nash welfare is a computationally hard problem \cite{lee2017apx}.
Since MWNW is a generalization of MNW, all of these drawbacks also apply to MWNW.
Recent work has therefore focused on the class of \emph{binary} valuations (sometimes referred to as \emph{binary additive} valuations), wherein the utilities are additive and each agent's utility for each good is either $0$ or $1$.
Under binary valuations, Halpern et al.~\cite{halpern2020binary} proved that a particular form of MNW, which they called \mnwtie{}, is \emph{group-strategyproof} and can be computed in polynomial time.
Group-strategyproofness stipulates that no group of agents can misreport their preferences in such a way that they all benefit.

As Halpern et al.~\cite{halpern2020binary} noted, the case of binary valuations is not merely a theoretical curiosity. 
While such valuations undoubtedly limit the expressiveness of preferences, they allow for very simple preference elicitation.
Indeed, in our earlier examples, books can be classified as being relevant to a particular library or not, supplies are either in surplus or shortage for each district, and organizations can indicate their human resource needs in terms of the personnel that they desire.
Binary valuations also correspond to \textit{approval votes}, which have long been studied in the voting literature \cite{brams2007approvalvoting,kilgour2010approval}.
For these reasons, the binary case has been given special attention in several fair division papers \cite{aleksandrov2015onlinefoodbank,amanatidis2021mnwefx,barman2018greedymnw,bouveret2016conflict,darmann2015nashproduct,freeman2019equitable,halpern2020binary,hosseini2020infowithholding,kyropoulou2019groupallocation}.
In addition, when different entitlements are allowed, binary valuations generalize the well-studied setting of \emph{apportionment}, wherein the goods are all identical (e.g., seats in a parliament) \cite{balinski2001fair,pukelsheim2014proportional}.

\subsection{Our Contributions}

In this paper, we show that MWNW (and therefore MNW) exhibits several desirable properties under binary valuations.
In particular, we consider a specific form of the MWNW rule that we call \mwnwtie{}.
\begin{itemize}
\item First, we show that \mwnwtie{} is resource- and population-monotone under binary valuations. 
Even in the unweighted setting, both results are new to the best of our knowledge.
Since resource-monotonicity corresponds to an important axiom in apportionment called \emph{house-monotonicity},\footnote{See, e.g., \cite[p.~117]{balinski2001fair}. This axiom is also known as \emph{committee-monotonicity} \cite{brill2020approval}, and a violation of it is referred to as the \emph{Alabama paradox}.} this suggests that it may be interesting to consider \mwnwtie{} as an apportionment method.
\item Second, we show that \mwnwtie{} satisfies \emph{group-strategyproofness}, which means that no group of agents can misreport their preferences in a way that all members of the group are better off. 
This generalizes the result of Halpern et al.~\cite{halpern2020binary} from the unweighted setting.
We also consider a stronger notion of group-strategyproofness, whereby it is only required that in the manipulating group, none of the members is worse off and at least one is better off.
We show that even when all agents have the same entitlement, \mwnwtie{} (which reduces to \mnwtie{}) fails to satisfy this stronger version.
\item Third, we present an algorithm that computes an \mwnwtie{} allocation in polynomial time.
This extends the result of Halpern et al.~\cite{halpern2020binary} from the unweighted setting.
\end{itemize}
Together with the known fairness and efficiency guarantees under general additive valuations \cite{chakraborty2020wef}, our results establish MWNW as a strong candidate rule when allocating indivisible goods among agents with binary valuations and arbitrary entitlements.
We also discuss the shortcomings of other candidate rules in this setting as a comparison to MWNW.

%%% SECTION: PRELIMINARIES
\section{Preliminaries}\label{sec:preliminaries}
In the setting of allocating indivisible goods, we are given a set $N = \{1,\dots,n\}$ of $n$ \textit{agents} and a set $G = \{g_1, \dots, g_m \}$ of $m$ \textit{goods}. 
Both $N$ and $G$ are not necessarily fixed, as extra agents or goods may be added.
Subsets of goods in $G$ are referred to as \textit{bundles}. Each agent $i \in N$ has a \textit{weight} $w_i > 0$ (representing her entitlement), and a nonnegative \textit{valuation function} $v_i$ over bundles of goods. 
We assume that $v_i$ is \emph{binary} (or \emph{binary additive}), meaning that $v_i(\{g\}) \in \{0,1\}$ for all $i\in N$ and $g\in G$, and $v_i(S) = \sum_{g \in S} v_i(\{g\})$ for any $S\subseteq G$.
For the sake of simplicity, we sometimes write $v_i(g)$ instead of $v_i(\{g\})$ for $g\in G$.
Let the vector of agent weights be $\mathbf{w} = (w_1, \dots, w_n)$, and the vector of agent valuations (which we call the \textit{valuation profile) }be $\mathbf{v} = (v_1, \dots, v_n)$. A problem \textit{instance} $\mathcal{I}$ is defined by the set of agents, goods, weights, and valuation functions.
 
An \emph{allocation} $\mathcal{A} = (A_1, \dots, A_n)$ is a list of $n$ bundles such that no two bundles overlap, where agent $i$ receives bundle $A_i$; let $\Pi_n(G)$ denote the set of all possible allocations.
For an allocation $\mathcal{A}$, let its \textit{utility vector} be $(v_1(A_1), \dots, v_n(A_n))$ and its \textit{weighted utility vector} be $(v_1(A_1)^{w_1}, \dots, v_n(A_n)^{w_n})$.
For a subset of agents $S\subseteq N$, let  $\mathcal{A}_S$ denote the allocation derived from restricting $\mathcal{A}$ to the bundles of the agents in $S$ (in the same order as in $\mathcal{A})$. 
An \textit{allocation rule} is a function that maps each instance to an allocation. 
We say that a good $g$ is \textit{unvalued} if $v_i(g) = 0$ for all agents $i\in N$, and \textit{valued} otherwise. 
An allocation $\mathcal{A}$ is said to be \textit{minimally complete} if it allocates all valued goods and does not allocate any unvalued good.

We can now state the definition of the MWNW rule.

\begin{definition}[MWNW]
    An allocation $\mathcal{A}$ is a \emph{maximum weighted Nash welfare (MWNW) allocation} if, among the set of allocations in $\Pi_n(G)$, it maximizes the number of agents receiving positive utility and, subject to that, maximizes the weighted product of positive utilities. 
    Formally, let $P(\mathcal{A}) = \{i \in N : v_i(A_i) > 0 \}$ and $\mathcal{P} = \argmax_{\mathcal{A} \in \Pi_n(G)} |P(\mathcal{A})|$. Then, $\mathcal{A}$ is an \emph{MWNW} allocation if $\mathcal{A} \in  \argmax_{\mathcal{A'}\in \mathcal{P}} \prod_{i \in P(\mathcal{A'})} v_i(A'_i)^{w_i}$.
    
    The \emph{MWNW rule} selects an \emph{MWNW} allocation from any instance, breaking ties arbitrarily if there is more than one such allocation. 
\end{definition}

When all weights are equal (sometimes referred to as the \textit{unweighted} setting), the MWNW rule reduces to the well-known maximum Nash welfare (MNW) rule.
Halpern et al.~\cite{halpern2020binary} introduced the rule \mnwtie, a version of MNW with additional tie-breaking specifications.
We will extend \mnwtie{} to the weighted setting.
An allocation $\mathcal{A}$ is said to be \textit{lexicographically dominating} in a set of allocations $\mathcal{C}$ if it maximizes the weighted utility vector in a lexicographical order. 
More precisely, among all allocations in $\mathcal{C}$, the allocation $\mathcal{A}$ maximizes $v_1(A_1)^{w_1}$, then subject to that, it maximizes $v_2(A_2)^{w_2}$, and so on. The main allocation rule that we consider in this paper is the following.

\begin{definition}[\mwnwtie{}]
\label{def:mwnwtie}
    The rule \emph{\mwnwtie{}} returns an allocation $\mathcal{A}$ such that
    \begin{enumerate}
        \item $\mathcal{A}$ is an \emph{MWNW} allocation that is also lexicographically dominating in the set of all \emph{MWNW} allocations; and
        \item $\mathcal{A}$ is minimally complete.
    \end{enumerate}
\end{definition}
If there are several allocations satisfying both conditions, then \mwnwtie{} arbitrarily picks one of them. 
Note that even though there can be more than one \mwnwtie{} allocation, all such allocations have the same (weighted) utility vector.
\mwnwtie{} reduces to \mnwtie{} when all weights are equal. 

We now establish two properties of \mwnwtie{} that will be useful for proving our results.

\begin{lemma}\label{lemma-mwnw-value}
    Under binary valuations, in any \emph{\mwnwtie{}} allocation $\mathcal{A}$, every agent values every good in her own bundle. That is, $\forall i \in N$, $\forall g \in A_i$, $v_i(g) = 1$.
\end{lemma}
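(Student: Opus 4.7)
The plan is to argue by contradiction. Suppose some agent $i$ is allocated a good $g \in A_i$ with $v_i(g) = 0$, and split on whether or not $g$ is valued by some other agent. In each case I will exhibit a violation of one of the two defining conditions of \mwnwtie{} from Definition~\ref{def:mwnwtie}, which will give the contradiction.

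In the first case, assume $v_k(g) = 0$ for every agent $k \in N$. Then $g$ is an unvalued good that has been allocated (to $i$), so $\mathcal{A}$ is not minimally complete, contradicting condition~2 of the definition of \mwnwtie{}. In the second case, fix some agent $j \neq i$ with $v_j(g) = 1$ (such $j$ exists and differs from $i$ because $v_i(g) = 0$). Construct $\mathcal{A}'$ by transferring $g$ from $i$'s bundle to $j$'s bundle: $A'_i = A_i \setminus \{g\}$, $A'_j = A_j \cup \{g\}$, and $A'_k = A_k$ otherwise. Since $v_i(g) = 0$, binary additivity gives $v_i(A'_i) = v_i(A_i)$; since $g \notin A_j$ and $v_j(g) = 1$, it gives $v_j(A'_j) = v_j(A_j) + 1$.

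Now I will contradict condition~1 (that $\mathcal{A}$ is MWNW) by considering the two subcases corresponding to the two-stage optimization in the MWNW definition. If $v_j(A_j) = 0$, then $j \notin P(\mathcal{A})$ but $j \in P(\mathcal{A}')$, while no agent's status in $P$ changes otherwise; hence $|P(\mathcal{A}')| > |P(\mathcal{A})|$, contradicting that $\mathcal{A}$ maximizes the number of positively utilized agents. If instead $v_j(A_j) \geq 1$, then $P(\mathcal{A}') = P(\mathcal{A})$, and since $w_j > 0$ and $v_j(A'_j) > v_j(A_j) \geq 1$, the factor $v_j(A_j)^{w_j}$ strictly increases while all other factors stay the same, so the weighted Nash product strictly increases, contradicting the second stage of the MWNW definition.

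There is no real obstacle here; the only subtlety is being careful with the two-level definition of MWNW (first maximize $|P|$, then maximize the weighted product over $P$), which is exactly why the argument is organized into the two subcases above. The minimal-completeness condition handles the ``nobody values $g$'' case that the swap argument cannot reach.
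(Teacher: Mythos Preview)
Your proof is correct and follows essentially the same approach as the paper. The paper invokes minimal completeness up front to conclude that $g$ must be valued by some $j$, then splits on whether the weighted Nash welfare is positive or zero; you instead split first on whether $g$ is valued at all (handling the unvalued case via minimal completeness) and then, within the valued case, on whether $j$ already has positive utility---but the underlying transfer argument and the two-stage contradiction with the MWNW definition are the same.
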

\begin{proof}
    Suppose, for a contradiction, that there exists a good $g \in A_i$ such that $v_i(g) = 0$. Since any \mwnwtie{} allocation $\mathcal{A}$ is minimally complete, some agent $j \in N$ must have positive value for it: $v_j(g) = 1$. If the weighted Nash welfare of $\mathcal{A}$ is greater than $0$, we can strictly improve the weighted Nash welfare by assigning $g$ to $j$ instead of $i$. On the other hand, if the weighted Nash welfare of $\mathcal{A}$ is 0, we can increase the number of agents receiving positive utility or increase the product of positive utilities by assigning $g$ to $j$. Both cases contradict the assumption that $\mathcal{A}$ is an \mwnwtie{} allocation.
     \hfill $\square$  
\end{proof}

\begin{lemma}\label{lemma-subset-mwnw}
    Under binary valuations, suppose we have an \emph{\mwnwtie{}} allocation $\mathcal{A} = (A_1, \dots, A_n)$. Then, for any subset of agents $S \subseteq N$, $\mathcal{A}_S$ is also an \emph{\mwnwtie{}} allocation for the agents in $S$.
\end{lemma}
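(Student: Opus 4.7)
The plan is to prove each of the three defining conditions of \mwnwtie{} (Definition~\ref{def:mwnwtie}) for the restricted instance $\mathcal{I}_S$ by a single swap argument. I take $\mathcal{I}_S$ to have agent set $S$, good set $G_S := \bigcup_{i \in S} A_i$, and the induced weights and valuations. Minimal completeness of $\mathcal{A}_S$ on $\mathcal{I}_S$ is immediate from Lemma~\ref{lemma-mwnw-value}: every good in $G_S$ lies in some $A_i$ with $i \in S$ and is therefore valued by $i$, so every good of $\mathcal{I}_S$ is valued and $\mathcal{A}_S$ allocates all of them.

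For the MWNW condition, I would argue by contradiction. Suppose some allocation $\mathcal{B} = (B_i)_{i \in S}$ of $G_S$ to $S$ either gives more agents in $S$ positive utility than $\mathcal{A}_S$, or the same number but a strictly larger weighted product. I would lift $\mathcal{B}$ to the full instance by defining $\mathcal{A}'$ with $A'_i = B_i$ for $i \in S$ and $A'_i = A_i$ for $i \notin S$. Since $G_S$ and $\bigcup_{i \notin S} A_i$ are disjoint, $\mathcal{A}'$ is a valid allocation of goods in $G$ to $N$. Because $\mathcal{A}$ and $\mathcal{A}'$ coincide on $N \setminus S$, both $|P(\cdot)|$ and the weighted product of positive utilities decompose additively and multiplicatively into an $S$-part and an $(N\setminus S)$-part, so any strict improvement of $\mathcal{B}$ over $\mathcal{A}_S$ transfers to a strict improvement of $\mathcal{A}'$ over $\mathcal{A}$, contradicting $\mathcal{A}$'s MWNW status.

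The lexicographic condition is handled by the same lifting. Suppose $\mathcal{B}$ is an MWNW allocation of $\mathcal{I}_S$ that lexicographically dominates $\mathcal{A}_S$, and let $j$ be the smallest index in $S$ at which $\mathcal{B}$ and $\mathcal{A}_S$ differ. Using the previous paragraph, $\mathcal{A}_S$ is itself MWNW for $\mathcal{I}_S$, so $\mathcal{B}$ and $\mathcal{A}_S$ share the same $|P|$ and weighted product; therefore the lifted $\mathcal{A}'$ shares these with $\mathcal{A}$ and is an MWNW allocation of the original instance. Moreover, $\mathcal{A}$ and $\mathcal{A}'$ agree on every $i < j$ in $N$ (on $N\setminus S$ by construction, on $S$ because $\mathcal{B}$ agrees with $\mathcal{A}_S$ below $j$), and differ at $j$ with $\mathcal{A}'$ giving the larger weighted utility. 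Thus $\mathcal{A}'$ lexicographically dominates $\mathcal{A}$ inside the set of MWNW allocations, contradicting the choice of $\mathcal{A}$.

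I do not anticipate a serious obstacle beyond bookkeeping; the main care is to run all three contradictions off the same swap and to remember that ``MWNW'' itself already encodes a two-stage optimization of $(|P|, \prod v^{w})$, so that $\mathcal{A}'$ inherits MWNW status before one invokes the lexicographic clause. The only place the binary-valuations hypothesis is used is via Lemma~\ref{lemma-mwnw-value}, to guarantee that $G_S$ contains only goods that are valued within $\mathcal{I}_S$.
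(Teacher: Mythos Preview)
Your proposal is correct and follows essentially the same swap/lift argument as the paper: assume a better allocation on the restricted instance, splice it back into $\mathcal{A}$ by leaving $N\setminus S$ untouched, and contradict the \mwnwtie{} status of $\mathcal{A}$ at the appropriate tier (number of positive agents, weighted product, or lexicographic order). You are slightly more explicit than the paper in verifying minimal completeness via Lemma~\ref{lemma-mwnw-value} and in spelling out why the lifted allocation is MWNW before invoking the lexicographic clause, but the underlying idea is identical.
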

\begin{proof}
    Suppose, for a contradiction, that $\mathcal{A}_S$ is not an \mwnwtie{} allocation. This means that there exists another allocation $\mathcal{A}_S'$ of the goods in $\mathcal{A}_S$ to the agents in $S$ such that
    \begin{enumerate}[(i)]
        \item strictly more agents receive positive utility, or
        \item the same number of agents receive positive utility, but the weighted Nash welfare of these agents is strictly higher, or 
        \item the same number of agents receive positive utility, with the same weighted Nash welfare for these agents, but the weighted utility vector of $\mathcal{A}_S'$ is lexicographically preferred to that of $\mathcal{A}_S$.
    \end{enumerate}
    Case (i) means that the number of agents receiving positive utility in $\mathcal{A}$ is not maximized, a contradiction. In case (ii), if we retain the allocation for agents in $N\setminus S$ as in $\mathcal{A}$ while using the allocation $\mathcal{A}_S'$ for agents in $S$, then in the new allocation, the same number of agents receive positive utility as in $\mathcal{A}$, but the weighted Nash welfare of these agents is strictly higher, a contradiction.
    Finally, in case (iii), this means that $\mathcal{A}$ is not lexicographically dominating, yielding yet another contradiction. \hfill $\square$ 
\end{proof}

To end this section, we introduce the notion of a \textit{transformation graph}. 
Consider two allocations $\mathcal{A}$ and $\mathcal{A}'$, where the set of goods allocated may be different (but the set of agents is the same).
Let the transformation graph $\mathcal{G}(\mathcal{A}, \mathcal{A'})$ from $\mathcal{A}$ to $\mathcal{A'}$ be a directed graph where the nodes represent the agents, and for each good $g$ such that $g\in A_i$ and $g\in A_j'$ for some $i\ne j$, we add an edge $i\rightarrow j$. 
In particular, there may be multiple edges from one node $i$ to another node $j$.

%%% SECTION: MONOTONICITY
\section{Monotonicity}\label{sec:monotonicity}

In this section, we show that \mwnwtie{} satisfies resource- and population-monotonicity.
As a by-product of our proof for resource-monotonicity, we will also obtain a polynomial-time algorithm that computes an \mwnwtie{} allocation.
We first state the definitions of the two monotonicity properties---both of them have been studied in recent fair division papers \cite{chakraborty2021picking,segal2017monotonicity,segal2019monotonicity}.

\begin{definition}[Resource-monotonicity]
    An allocation rule $f$ is \emph{resource-monotone} if the following holds: For any two instances $\mathcal{I}$ and $\mathcal{I'}$ such that $\mathcal{I'}$ can be obtained from $\mathcal{I}$ by adding one extra good, if $f(\mathcal{I}) = \mathcal{A}$ and $f(\mathcal{I'}) = \mathcal{A'}$, then for each $i \in N$, $v_i(A_i) \leq v_i(A'_i)$.
\end{definition}

\begin{definition}[Population-monotonicity]
    An allocation rule $f$ is \emph{population-monotone} if the following holds: For any two instances $\mathcal{I}$ and $\mathcal{I'}$ such that $\mathcal{I'}$ can be obtained from $\mathcal{I}$ by adding one extra agent, if $f(\mathcal{I}) = \mathcal{A}$ and $f(\mathcal{I'}) = \mathcal{A'}$, then for each agent $i$ in the original set of agents $N$, $v_i(A_i) \geq v_i(A'_i)$.
\end{definition}

We now proceed to our first main result.

\begin{theorem}\label{thm-resourcemonotone}
    Under binary valuations, the \emph{\mwnwtie{}} rule is resource-monotone.
\end{theorem}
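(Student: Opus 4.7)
The plan is to proceed by contradiction using the transformation graph: assume some agent $k$ has $v_k(A_k) > v_k(A_k')$, extract an ``augmenting path'' from the multigraph $\mathcal{G}(\mathcal{A}, \mathcal{A}')$, and show that one of the two \mwnwtie{} allocations admits a strictly better competitor along this path. As a warm-up case, if $g^*$ is unvalued, then every minimally complete allocation for $\mathcal{I}'$ is also a minimally complete allocation for $\mathcal{I}$, so $\mathcal{A}$ and $\mathcal{A}'$ must share the same utility vector; thus I may assume $g^*$ is valued and, by minimal completeness, $g^* \in A_{k^*}'$ for a unique $k^*$.

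Next, I analyze $\mathcal{G}(\mathcal{A}, \mathcal{A}')$. Applying Lemma~\ref{lemma-mwnw-value} to both allocations shows that every good witnessing an edge is valued by both endpoints, and that the in-degree minus out-degree at an agent $i$ equals $\delta_i - [i = k^*]$, where $\delta_i := v_i(A_i') - v_i(A_i)$. Since these net values sum to zero and $\delta_k < 0$, a flow-decomposition argument extracts a simple directed path $u = v_0 \to v_1 \to \cdots \to v_r = k$ from some node $u$ with positive net flow (hence $v_u(A_u') > v_u(A_u)$) to $k$. Each edge $v_i \to v_{i+1}$ is witnessed by a good $g_i \in A_{v_i} \cap A_{v_{i+1}}'$, valued by both $v_i$ and $v_{i+1}$.

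The third step is a pair of path-swaps. Moving $g_0, \ldots, g_{r-1}$ in $\mathcal{A}$ along the path produces a competitor $\hat{\mathcal{A}}$ for $\mathcal{I}$ in which $u$ loses and $k$ gains exactly one unit of utility, with all intermediate agents unaffected; reversing the same reassignment in $\mathcal{A}'$ produces $\hat{\mathcal{A}}'$ for $\mathcal{I}'$ in which $u$ gains and $k$ loses one unit. I then apply the three \mwnwtie{} criteria---maximum positive-utility count, weighted product, and lexicographic order---both to the pair $(\mathcal{A}, \hat{\mathcal{A}})$ and to the pair $(\mathcal{A}', \hat{\mathcal{A}}')$. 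The lexicographic clause is especially useful: depending on whether $u < k$ or $u > k$ as agent indices, exactly one of the two path-swaps would lex-dominate its reference allocation whenever the weighted products tie, forcing a strict weighted-product inequality on that side. Combining this strict inequality with the weak one from the other side, together with $v_u(A_u') \ge v_u(A_u) + 1$ and $v_k(A_k) \ge v_k(A_k') + 1$, is intended to yield the contradiction. As a byproduct, this construction shows that an \mwnwtie{} allocation for $\mathcal{I}'$ can be obtained from one for $\mathcal{I}$ by a single augmenting path, which is the seed of the polynomial-time algorithm promised elsewhere in the paper.

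The main obstacle is the boundary handling. When $v_u(A_u) = 1$ or $v_k(A_k') = 1$, the corresponding single path-swap drops an endpoint out of the positive-utility set, so the product-optimality inequality on that side is unavailable and the $|P|$-maximality criterion governs instead. I plan to address these degenerate cases by selecting the decomposition path extremally---for instance, a shortest path from a positive-net node to $k$, or one that minimizes the index order of its endpoints---to avoid utility-one endpoints whenever possible; in the remaining unavoidable cases, a direct appeal to the $|P|$-maximality criterion, combined with a suitably chosen secondary path that realizes the positive-count advantage, should close the argument. This careful case analysis is the delicate core of the proof.
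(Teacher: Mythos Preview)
Your outline is the paper's argument: build the transformation graph, extract a path between a gaining agent and the losing agent $k$, perform the path-swap in both $\mathcal{A}$ and $\mathcal{A}'$, and combine the two MWNW inequalities with the lexicographic tie-break. However, the path direction is reversed. By your own degree formula, $k$ has out-degree exceeding in-degree (since $\delta_k<0$) while any $u$ with $\delta_u>0$ has in-degree exceeding out-degree; after cycle removal, flow decomposition therefore yields a directed path $k\to\cdots\to u$, not $u\to\cdots\to k$, and the latter need not exist. (The paper sidesteps the $[i=k^*]$ bookkeeping by first deleting $g^*$ from $\mathcal{A}'$ and working with the resulting $\mathcal{A}''$.) With the correct orientation, the $\mathcal{A}$-swap lowers $k$ by one and raises $u$ by one, and the $\mathcal{A}'$-swap does the opposite; your stated dropout cases $v_u(A_u)=1$ and $v_k(A_k')=1$ are artifacts of the reversed orientation.

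Your plan for the boundaries is also over-engineered. No extremal path choice or secondary path is needed; a direct two-case split suffices, and this is exactly what the paper does. If $v_u(A_u)=0$, the $\mathcal{A}$-swap strictly increases the positive-utility count unless $v_k(A_k)=1$, and in that residual case the products tie and the lex criterion forces $k<u$; symmetrically, since then $v_k(A_k')=0$, the $\mathcal{A}'$-swap forces $u<k$, a contradiction. If $v_u(A_u)\ge 1$, both product inequalities are available (written as the paper does, a zero factor on the smaller side is harmless), and combining them with $v_u(A_u')\ge v_u(A_u)+1$ and $v_k(A_k)\ge v_k(A_k')+1$ squeezes everything to equality, whereupon the lex criterion again yields the contradictory pair $k<u$ and $u<k$.
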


\begin{proof}
    Let $\mathcal{I}$ and $\mathcal{I'}$ be instances such that $\mathcal{I'}$ can be obtained from $\mathcal{I}$ by adding one extra good $g^*$. Suppose that \mwnwtie{} returns $\mathcal{A}$ on $\mathcal{I}$ and $\mathcal{A'}$ on $\mathcal{I'}$.\footnote{Recall that for each instance, all \mwnwtie{} allocations induce the same (weighted) utility vector.}
    By Lemma~\ref{lemma-mwnw-value}, in both $\mathcal{A}$ and $\mathcal{A'}$, every agent values every good that she receives.
    Let $\mathcal{A''}$ be the allocation that is equivalent to $\mathcal{A'}$ except that $g^*$ is excluded.

    Suppose, for a contradiction, that $|A_i| > |A'_i|$ for some agent $i \in N$. Construct the transformation graph $T = \mathcal{G}(\mathcal{A}, \mathcal{A''})$.
    If $T$ contains at least one cycle, pick one of them arbitrarily and remove it---we remove the edges of the cycle from the graph without changing the allocations, so the resulting graph may no longer be the transformation graph of the same allocations.
    Repeat this process until $T$ becomes acyclic.
    Note that this procedure does not change the difference between the indegree and the outdegree of any node.
    
    Since $|A_i| > |A_i'| \ge |A_i''|$, there must be at least one outgoing edge from~$i$. 
    As the graph is finite and has no cycle, by starting from $i$ and following an outgoing edge, we must eventually reach a node~$j$ with no outgoing edge.
    In particular, it holds that $|A_j| < |A''_j|$.
    From $\mathcal{A}$, we can transfer goods along the path from $i$ to~$j$; denote the allocation after this sequence of transfers by $\widehat{\mathcal{A}}$. 
    Since $\mathcal{A}$ is an \mwnwtie{} allocation, it is preferred to $\widehat{\mathcal{A}}$ by the \mwnwtie{} rule.  
    Note that $|\widehat{A}_i| = |A_i| - 1$, $|\widehat{A}_j| = |A_j| + 1$, and $|\widehat{A}_k| = |A_k|$ for all $k\in N\setminus~\{i,j\}$.
    Consequently, the weighted utility vector $(|A_i|^{w_i}, |A_j|^{w_j})$ is preferred to the weighted utility vector $((|A_i|-1)^{w_i}, (|A_j|+1)^{w_j})$ by the \mwnwtie{} rule.\footnote{Even though we write agent~$i$'s weighted utility before agent~$j$'s in the weighted utility vectors, the actual order would be reversed if $j < i$.} 

    Similarly, from $\mathcal{A''}$, we can transfer goods along the path from $j$ to $i$ in the ``reverse'' graph $\mathcal{G}(\mathcal{A''},\mathcal{A})$. 
    Since the allocation $\mathcal{A''}$ is simply the allocation~$\mathcal{A'}$ without the good $g^*$, the same sequence of transfers is also feasible in $\mathcal{A'}$.
    Let the allocation after this sequence of transfers be $\widehat{\mathcal{A}}'$. 
    Since $\mathcal{A'}$ is an \mwnwtie{} allocation, it is preferred to $\widehat{\mathcal{A}}'$ by the \mwnwtie{} rule.
    Note that $|\widehat{A}'_i| = |A'_i| + 1$, $|\widehat{A}'_j| = |A'_j| - 1$, and $|\widehat{A}'_k| = |A'_k|$ for all $k\in N\setminus\{i,j\}$.
    Hence, the weighted utility vector $(|A'_i|^{w_i},|A'_j|^{w_j})$ is preferred to the weighted utility vector $((|A'_i|+1)^{w_i},(|A'_j|-~1)^{w_j})$ by the \mwnwtie{} rule. 
    
    Now, recall that we have 
    \begin{equation*}
        |A_i| > |A'_i| \geq |A''_i| \quad \text{ and } \quad |A'_j| \geq |A''_j| > |A_j| .
    \end{equation*}
    Since bundle sizes are integers, we get
    \begin{equation}\label{inq-resourcemonotone-2}
        |A_i| - 1 \geq |A'_i| \geq |A_i''| \quad\text{ and }\quad |A'_j| \geq |A''_j| \ge |A_j|+1.
    \end{equation}
    In particular, $|A_i| \ge 1$ and $|A'_j| \ge |A''_j| \geq 1$.
    We consider the following two cases based on whether $A_j$ is empty.
    \begin{description}
        \item[Case 1: $|A_j| = 0$.] If $|A_i| > 1$, then the allocation $\widehat{\mathcal{A}}$, with weighted utility vector $((|A_i|-1)^{w_i}, (|A_j|+1)^{w_j})$ for agents $i$ and $j$, has strictly more agents receiving positive utility, thereby contradicting the fact that $\mathcal{A}$ is an \mwnwtie{} allocation. Thus, $|A_i| = 1$, which by (\ref{inq-resourcemonotone-2}) means that $|A_i'| = 0$.
        In particular, $g^*\not\in A_i'$.
        Since $(|A_i|^{w_i}, |A_j|^{w_j})$ is preferred to $((|A_i|-1)^{w_i}, (|A_j|+1)^{w_j})$, it must be that $i < j$.
        
        We know from (\ref{inq-resourcemonotone-2}) that $|A_j'| \ge 1$.
        If $|A'_j| > 1$, then the allocation $\widehat{\mathcal{A}}'$, with weighted utility vector $((|A'_i|+1)^{w_i},(|A'_j|-1)^{w_j})$ for agents $i$ and $j$, has strictly more agents receiving positive utility, thereby contradicting the fact that $\mathcal{A}'$ is an \mwnwtie{} allocation. 
        Thus, $|A'_j| = 1$. 
        However, since the weighted utility vector $(|A'_i|^{w_i},|A'_j|^{w_j})$ is preferred to $((|A'_i|+1)^{w_i},(|A'_j|-~1)^{w_j})$, we must have $j < i$, a contradiction with the previous paragraph.
        
        \vspace{3mm}
        
        \item[Case 2: $|A_j| > 0$.] Then, by (\ref{inq-resourcemonotone-2}), we have  $|A'_j| \ge |A''_j| > 1$. Since $(|A_i|^{w_i}, |A_j|^{w_j})$ is preferred to $((|A_i|-1)^{w_i}, (|A_j|+1)^{w_j})$, we have
        \begin{equation}\label{eqn-resourcemonotone-case4-1}
            \frac{(|A_i|-1)^{w_i}(|A_j|+1)^{w_j}}{|A_i|^{w_i}|A_j|^{w_j}} \leq 1,
        \end{equation}
        where the denominator is nonzero as $|A_i| \geq 1$ and $|A_j| > 0$.
        From (\ref{inq-resourcemonotone-2}), we have $|A_i| \geq |A_i'| + 1$ and $|A'_j| - 1 \geq |A_j|$, which give us 
        \begin{equation}\label{eqn-resourcemonotone-case4-2}
            -\frac{1}{|A'_i|+1} \leq -\frac{1}{|A_i|} \quad\text{ and }\quad \frac{1}{|A'_j|-1} \leq \frac{1}{|A_j|}.
        \end{equation}
        All the denominators in the above inequalities are nonzero as $|A_i| \geq 1$, $|A_j| > 0$, and $|A'_j| > 1$.
        Also, since $(|A'_i|^{w_i}, |A'_j|^{w_j})$ is preferred to $((|A'_i|+~1)^{w_i}, (|A'_j|-1)^{w_j})$, we get
        \begin{equation}\label{eqn-resourcemonotone-case4-3}
            \frac{|A'_i|^{w_i}|A'_j|^{w_j}}{(|A'_i|+1)^{w_i}(|A'_j|-1)^{w_j}} \geq 1,
        \end{equation}
        where the denominator is nonzero as $|A'_j| > 1$.
        Now,
        \begin{align}\label{eqn-resourcemonotone-case4-4}
            \frac{|A'_i|^{w_i}|A'_j|^{w_j}}{(|A'_i|+1)^{w_i}(|A'_j|-1)^{w_j}} & = \left( 1 - \frac{1}{|A'_i|+1} \right)^{w_i} \left( 1 + \frac{1}{|A'_j|-1} \right)^{w_j} \nonumber \\
            & \leq \left( 1 - \frac{1}{|A_i|} \right)^{w_i} \left( 1 + \frac{1}{|A_j|} \right)^{w_j} \nonumber \\
            & = \frac{(|A_i|-1)^{w_i}(|A_j|+1)^{w_j}}{|A_i|^{w_i}|A_j|^{w_j}} \\
            & \leq 1, \nonumber 
    \end{align}
    where the second line is derived using (\ref{eqn-resourcemonotone-case4-2}) and last line is as a result of  (\ref{eqn-resourcemonotone-case4-1}). 
    Since \mwnwtie{} chooses a lexicographically dominating allocation among all MWNW allocations, (\ref{eqn-resourcemonotone-case4-1}) can be an equality only if $i < j$, and (\ref{eqn-resourcemonotone-case4-3}) can be an equality only if $j < i$. Combining (\ref{eqn-resourcemonotone-case4-1}), (\ref{eqn-resourcemonotone-case4-3}), and (\ref{eqn-resourcemonotone-case4-4}) yields
    \begin{equation*}
        1 \leq \frac{|A'_i|^{w_i}|A'_j|^{w_j}}{(|A'_i|+1)^{w_i}(|A'_j|-1)^{w_j}} \leq \frac{(|A_i|-1)^{w_i}(|A_j|+1)^{w_j}}{|A_i|^{w_i}|A_j|^{w_j}} \leq 1,
    \end{equation*}
    where either the leftmost or the rightmost inequality has to be strict, thereby giving us a contradiction.
    \end{description} 
    In both cases, we have arrived at a contradiction, completing the proof. \hfill $\square$ 
    
\end{proof}

The resource-monotonicity of \mwnwtie{} for binary valuations reveals an important structural property of \mwnwtie{} allocations: when a new valued good is added, exactly one agent's bundle will increase in size (although existing goods may get reallocated).

\begin{corollary}\label{cor-monotoneonegood}
    Let $\mathcal{I}$ and $\mathcal{I'}$ be instances such that $\mathcal{I'}$ can be obtained from $\mathcal{I}$ by adding one valued good, and suppose that $\mathcal{A}$ and $\mathcal{A'}$ are their respective \emph{\mwnwtie{}} allocations. Then, for exactly one agent $i \in N$, $v_i(A_i') = v_i(A_i) + 1$, and for all other agents $j \neq i$, $v_j(A_j') = v_j(A_j)$. 
\end{corollary}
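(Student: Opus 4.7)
The plan is to derive the corollary as a counting argument on top of resource-monotonicity (Theorem~\ref{thm-resourcemonotone}), using minimal completeness and Lemma~\ref{lemma-mwnw-value} to convert utilities into bundle sizes.

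First I would observe that by Lemma~\ref{lemma-mwnw-value}, in both allocations every agent values every good in her own bundle, so $v_i(A_i) = |A_i|$ and $v_i(A_i') = |A_i'|$ for every agent $i$. Next I would count the total number of valued goods on each side. Since $\mathcal{I}'$ is obtained from $\mathcal{I}$ by adding one valued good $g^*$, the set of valued goods in $\mathcal{I}'$ is exactly the set of valued goods in $\mathcal{I}$ together with $g^*$: no previously unvalued good becomes valued, and no previously valued good becomes unvalued, because the set of agents and their valuations are unchanged. Using minimal completeness of \mwnwtie{} allocations, it follows that $\sum_{i \in N} |A_i'| = \sum_{i \in N} |A_i| + 1$.

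Then I would invoke resource-monotonicity: for every $i \in N$, $v_i(A_i') \geq v_i(A_i)$, i.e., $|A_i'| \geq |A_i|$. Each difference $|A_i'| - |A_i|$ is therefore a nonnegative integer, and these differences sum to $1$. Hence exactly one difference equals $1$ and all others equal $0$, which is precisely the statement of the corollary after translating bundle sizes back to utilities.

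The argument is essentially a one-line pigeonhole on top of the earlier results, so I do not expect a real obstacle. The only subtle point worth double-checking is the claim that adding $g^*$ strictly increases the number of valued goods by exactly one; this requires noting that ``valued'' is a property of a good determined by the fixed set of agent valuations, so it is preserved for all old goods and $g^*$ is valued by hypothesis. Once that is recorded, the counting step and the appeal to Theorem~\ref{thm-resourcemonotone} finish the proof.
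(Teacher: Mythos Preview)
Your argument is correct and is exactly the intended one: the paper states this result as an immediate corollary of Theorem~\ref{thm-resourcemonotone} without giving a separate proof, and the natural justification is precisely the counting argument you wrote (Lemma~\ref{lemma-mwnw-value} plus minimal completeness to equate utilities with bundle sizes and total valued goods, then resource-monotonicity to force all differences nonnegative and summing to~$1$).
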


\begin{algorithm}[tb]
    \caption{\mwnwtie{} algorithm}
    \label{algo-mwnw}
    \textbf{Input}: Set of agents $N=\{1,\dots,n\}$, set of goods $G=\{g_1,\dots,g_m\}$, weight vector $\mathbf{w} = (w_1, \dots, w_n)$, and valuation vector $\mathbf{v} = (v_1, \dots, v_n)$\\
    \textbf{Output}: \mwnwtie{} allocation of the $m$ goods in $G$
    
    \begin{algorithmic}[1]
        \STATE Initialize the empty allocation $\mathcal{A}^0$ where $A_i^0 = \emptyset$ for all $i \in N$.
        \FOR{$t = 1,2,\dots,m$}
        \STATE $\mathcal{A}^t \leftarrow \texttt{AddOneGood}(\mathcal{A}^{t-1}, g_t)$ (see Algorithm~\ref{alg:addonegood}) \\
        \ENDFOR
        \STATE \textbf{return} $\mathcal{A}^m$
    \end{algorithmic}
\end{algorithm}

\begin{algorithm}[tb]
    \caption{\texttt{AddOneGood}}
    \label{proc-addonegood}
    \textbf{Input}: \mwnwtie{} allocation $\mathcal{A}^{t-1}$ ($t-1$ goods in total), and good $g$\\
    \textbf{Output}: \mwnwtie{} allocation $\mathcal{A}^{t}$ ($t$ goods in total)
    
    \begin{algorithmic}[1]
    \IF{$g$ is unvalued}
        \STATE \textbf{return} the allocation $\mathcal{A}^{t-1}$ with $g$ not assigned to any agent
    \ENDIF
    \STATE Create a directed graph $\mathcal{H}$ where the nodes correspond to the $n$ agents and an edge $x \rightarrow y$ exists whenever agent $y$ values at least one good in agent $x$'s bundle in $\mathcal{A}^{t-1}$. 
    \STATE $\mathcal{P} \leftarrow \emptyset$ 
    \STATE Create a dummy agent represented by node $d$, and bundle $A^{t-1}_d = \{g\}$. For each agent $i \in N$, add an edge $d \rightarrow i$ if $i$ values $g$.
    \FOR{each agent $i = 1, \dots, n$}
        \IF{a path $P_{d,i}$ from $d$ to $i$ exists}
            \STATE Add $(i, P_{d,i}, \mathbf{u}_{d,i})$ to $\mathcal{P}$, where $\mathbf{u}_{d,i}$ is the weighted utility vector of the allocation $\mathcal{A}^{t-1}$ if we were to add one good valued by $i$ to $i$'s bundle. 
        \ENDIF
    \ENDFOR
    \STATE Select the tuple in $\mathcal{P}$ with the maximum weighted Nash welfare across all tuples (computed using $\mathbf{u}_{d,i}$), breaking ties according to Definition~\ref{def:mwnwtie}. Let the corresponding path be $d \rightarrow a_1 \rightarrow a_2 \rightarrow \dots \rightarrow i$. Allocate $g$ to $a_1$, some good in $A_{a_1}^{t-1}$ that $a_2$ values to $a_2$, some good in $A_{a_2}^{t-1}$ that $a_3$ values to $a_3$, and so on. Let this new allocation be $\mathcal{A}^t$. \label{line:select}
    \STATE \textbf{return} $\mathcal{A}^t$
    \end{algorithmic}
\label{alg:addonegood}
\end{algorithm}

We leverage this property to devise an algorithm that computes an \mwnwtie{} allocation in polynomial time (Algorithm~\ref{algo-mwnw}).
By abstracting out the subroutine \texttt{AddOneGood} (Algorithm~\ref{alg:addonegood}), the main algorithm starts with a trivial \mwnwtie{} allocation (with no good being allocated), and iteratively allocates one good at a time using \texttt{AddOneGood}. 
At the end of each call to \texttt{AddOneGood}, the partial allocation will be \mwnwtie{} with respect to the allocated goods.
The correctness of the algorithm is shown in the following theorem.

\begin{theorem}\label{thm-mwnwalgo}
    Under binary valuations, Algorithm~\ref{algo-mwnw} computes an \emph{\mwnwtie{}} allocation in polynomial time.
\end{theorem}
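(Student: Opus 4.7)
The plan is to prove correctness by induction on the number of goods processed, using Corollary~\ref{cor-monotoneonegood} as the main lever, and then argue that each call to \texttt{AddOneGood} takes polynomial time. Concretely, I would maintain the invariant that after iteration $t$, the partial allocation $\mathcal{A}^t$ is an \mwnwtie{} allocation for the instance restricted to the goods $\{g_1,\ldots,g_t\}$. The base case $t=0$ is immediate since the empty allocation is trivially \mwnwtie{} when no good is present. If $g_t$ is unvalued, minimal completeness dictates that $g_t$ remain unassigned, so $\mathcal{A}^t$ inherits the \mwnwtie{} property from $\mathcal{A}^{t-1}$ directly.

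For the interesting case where $g_t$ is valued, I would argue as follows. By Corollary~\ref{cor-monotoneonegood}, any \mwnwtie{} allocation $\mathcal{A}^*$ for $\{g_1,\ldots,g_t\}$ has a utility vector that agrees with that of $\mathcal{A}^{t-1}$ except at a single agent $i^*$, whose utility is one higher. Consider the augmented allocation $\widetilde{\mathcal{A}}$ that coincides with $\mathcal{A}^{t-1}$ but gives $g_t$ to the dummy node $d$, and form the transformation graph $\mathcal{G}(\widetilde{\mathcal{A}}, \mathcal{A}^*)$. Using Lemma~\ref{lemma-mwnw-value} in both allocations, every agent values every good in her own bundle, so all bundle sizes equal utilities. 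A counting argument on indegrees minus outdegrees then shows that $d$ has net outdegree $+1$, $i^*$ has net indegree $+1$, and every other node is balanced; hence there is a directed path from $d$ to $i^*$. Since each edge $x \to y$ in this transformation graph corresponds to a good in $x$'s bundle that $y$ values, it is also an edge of the graph $\mathcal{H}$ built by \texttt{AddOneGood}, so $i^*$ is among the candidates enumerated by the algorithm.

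Next, I would show that for any candidate agent $i$ reachable from $d$ in $\mathcal{H}$, the transfer along the corresponding path produces a \emph{valid} allocation with utility vector equal to $\mathbf{u}_{d,i}$: each intermediate agent trades away a good and receives one she values, so her utility is unchanged, while $i$ gains exactly one valued good; minimal completeness is preserved because $g_t$ is now assigned and no unvalued goods are introduced. Thus the set of realizable utility vectors for candidate \mwnwtie{} allocations on $\{g_1,\dots,g_t\}$ is exactly $\{\mathbf{u}_{d,i}\}$, and the algorithm's selection rule in line~\ref{line:select} mirrors the definition of \mwnwtie{} (maximizing the number of positive-utility agents, then the weighted Nash welfare, then the lexicographic tie-break). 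This yields the inductive step. I would also note that the resulting $\mathcal{A}^t$ can be extended by keeping the remaining goods' allocation intact, so Lemma~\ref{lemma-subset-mwnw} is not strictly needed here, but the reasoning respects it.

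For the runtime, \texttt{AddOneGood} constructs $\mathcal{H}$ on $n+1$ nodes with at most $O(nm)$ edges, performs a single graph search (e.g., BFS) from $d$ to determine reachable agents and their paths in $O(n^2 m)$ time, computes each $\mathbf{u}_{d,i}$ in $O(n)$ time, and compares $O(n)$ candidates under the \mwnwtie{} ordering. The outer loop runs $m$ times, giving overall polynomial complexity. The main technical obstacle is the reachability argument via the transformation graph, because it is what guarantees that the truly optimal agent $i^*$ is not missed by the algorithm's enumeration; the rest reduces to a careful bookkeeping of utilities and tie-breaking.
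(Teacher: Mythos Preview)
Your proposal is correct and follows essentially the same approach as the paper: induct on $t$, invoke Corollary~\ref{cor-monotoneonegood} to pin down the unique agent $i^*$ whose utility rises, use the transformation graph between the dummy-augmented $\mathcal{A}^{t-1}$ and a true \mwnwtie{} allocation to show that $i^*$ is reachable from $d$ in $\mathcal{H}$, and conclude that the selection step therefore returns an allocation with the \mwnwtie{} utility vector. The only stylistic difference is that the paper establishes the $d\to i^*$ path by explicitly eliminating cycles and then walking outward from $d$, whereas you invoke the in/out-degree balance directly; both arguments are standard and equivalent, though you should make explicit that balanced degrees plus one source/sink forces a directed path (e.g., via Eulerian decomposition or the same cycle-elimination step). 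Your runtime bound and the paper's differ slightly in constants but are both polynomial.
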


\begin{proof}
    To prove that the allocation returned by Algorithm~\ref{algo-mwnw} is an \mwnwtie{} allocation, we show that after every iteration, the allocation returned by the subroutine \texttt{AddOneGood} is \mwnwtie{} with respect to the goods that are already allocated. 
    
    Let $t \in \{1,\dots,m\}$ be some iteration of the \textbf{for} loop of Algorithm~\ref{algo-mwnw}, and let $g$ be the good added during this iteration.
    If $g$ is unvalued, clearly the allocation $\mathcal{A}^{t-1}$ with $g$ not assigned to any agent is \mwnwtie{} with respect to the first $t$ goods.
    Assume therefore that $g$ is valued.
    Hence, in the graph $\mathcal{H}$, there is an edge $d\rightarrow i$ for some $i\in N$.
    This means that at least one tuple is added to $\mathcal{P}$, and this iteration terminates with an allocation $\mathcal{A}^t$.
    
    Let $\widehat{\mathcal{A}}^t$ be an \mwnwtie{} allocation of the $t$ goods.
    By Corollary~\ref{cor-monotoneonegood}, exactly one agent $z$ has $v_z(\widehat{A}^t_z) = v_z(A_z^{t-1}) + 1$, while all other agents $y\ne z$ have $v_y(\widehat{A}^t_y) = v_y(A_y^{t-1})$.
    Let $\widehat{\mathcal{A}}^{t-1}$ be the allocation obtained by adding a dummy agent $d$ and the good $g$ to $\mathcal{A}^{t-1}$, so that $\widehat{A}_d^{t-1} = \{g\}$.
    Consider the transformation graph $\mathcal{G}(\widehat{\mathcal{A}}^{t-1}, \widehat{\mathcal{A}}^t)$, where we assume that $\widehat{A}_d^t = \emptyset$.
    As in the proof of Theorem~\ref{thm-resourcemonotone}, we can eliminate all cycles from this graph.
    Node $d$ has one outgoing edge and no incoming edge, while every other node except $z$ has the same number of outgoing and incoming edges (in particular, if such a node has an incoming edge, it must also have an outgoing edge).
    Since the graph is finite and has no cycle, by starting from $d$ and iteratively following an outgoing edge, we must eventually reach $z$.
    In other words, there is a path from $d$ to $z$ in $\mathcal{G}(\widehat{\mathcal{A}}^{t-1}, \widehat{\mathcal{A}}^t)$.
    Let this path be $d\rightarrow b_1\rightarrow b_2\rightarrow\dots\rightarrow z$.
    
    By construction of the path, agent~$b_1$ values some good in agent~$d$'s bundle, agent~$b_2$ values some good in agent~$b_1$'s bundle, and so on.
    This means that the path $d\rightarrow b_1\rightarrow b_2\rightarrow\dots\rightarrow z$ also exists in the graph $\mathcal{H}$ in Algorithm~\ref{alg:addonegood}.
    Hence, a tuple $(z,P_{d,z},\mathbf{u}_{d,z})$ is added to $\mathcal{P}$, where $P_{d,z}$ is a path from $d$ to $z$ (which may be different from the aforementioned path).
    Since the weighted utility vector $\mathbf{u}_{d,z}$ is the same as the weighted utility vector of $\widehat{\mathcal{A}}^t$, and by the selection method in line~\ref{line:select} of Algorithm~\ref{alg:addonegood}, the allocation $\mathcal{A}^t$ returned by \texttt{AddOneGood} is also an \mwnwtie{} allocation.
    This completes the proof of correctness.
    
    Finally, we analyze the time complexity of Algorithm~\ref{algo-mwnw}.
    In the main algorithm, there are $\mathcal{O}(m)$ iterations of the \textbf{for} loop. In the \texttt{AddOneGood} subroutine, checking whether $g$ is unvalued takes $\mathcal{O}(n)$ time, constructing the graph $\mathcal{H}$ takes $\mathcal{O}(mn)$ time, and
    there are $\mathcal{O}(n)$ iterations of the \textbf{for} loop. Within each iteration, finding if a path exists can be done using breadth-first search or depth-first search, which takes $\mathcal{O}(n^2)$ time~\cite{cormen2009introduction}. 
    In the last step, selecting the tuple in $\mathcal{P}$ takes $\mathcal{O}(n)$ time and allocating along the path also takes $\mathcal{O}(n)$ time. Putting everything together, Algorithm~\ref{algo-mwnw} terminates in $\mathcal{O}(mn(m+n^2))$ time. \hfill $\square$  
\end{proof}

Next, with the help of the subroutine \texttt{AddOneGood}, we establish the population-monotonicity of \mwnwtie{}.

\begin{theorem}\label{thm-populationmonotone}
    Under binary valuations, the \emph{\mwnwtie{}} rule is population-monotone.
\end{theorem}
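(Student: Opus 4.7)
The plan is to reduce population-monotonicity to resource-monotonicity (Theorem~\ref{thm-resourcemonotone}) via Lemma~\ref{lemma-subset-mwnw}. Let $\mathcal{I}$ have agent set $N$ and $\mathcal{I}'$ add one extra agent $n+1$, with \mwnwtie{} allocations $\mathcal{A}$ and $\mathcal{A}'$, respectively, on the common good set $G$. I want to show $v_i(A_i) \geq v_i(A_i')$ for every $i\in N$.

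First I would apply Lemma~\ref{lemma-subset-mwnw} with $S = N$ to the allocation $\mathcal{A}'$: the restriction $\mathcal{A}'_N$ is an \mwnwtie{} allocation for the agents in $N$ over the reduced good set $G' := G\setminus A'_{n+1}$. In particular, the \mwnwtie{} utility of agent~$i$ in the instance $(N, G')$ equals $v_i(A_i')$, since any two \mwnwtie{} allocations of the same instance share the same utility vector.

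Next I would add back the goods of $A'_{n+1}$ to this sub-instance one at a time. Each such addition yields a new instance whose agent set is still $N$, and by Theorem~\ref{thm-resourcemonotone} no agent in $N$ sees her \mwnwtie{} utility decrease. After all goods in $A'_{n+1}$ have been returned, the instance is exactly $\mathcal{I} = (N, G)$, whose \mwnwtie{} utility vector is $(v_1(A_1), \dots, v_n(A_n))$. Chaining the inequalities along the sequence of additions gives $v_i(A_i) \geq v_i(A_i')$ for every $i\in N$, which is population-monotonicity.

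There is no real obstacle here once the two earlier results are available; the only subtlety worth flagging explicitly is that the argument relies on the fact that \mwnwtie{} utility vectors are well-defined (a point noted after Definition~\ref{def:mwnwtie}), so that ``the utility of agent~$i$'' in an intermediate instance is unambiguous and the resource-monotonicity inequalities can be composed across the chain of added goods.
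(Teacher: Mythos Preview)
Your argument is correct and follows essentially the same route as the paper: both proofs apply Lemma~\ref{lemma-subset-mwnw} to the $(n{+}1)$-agent \mwnwtie{} allocation and then reinsert the removed agent's goods one by one, invoking resource-monotonicity at each step. The only cosmetic difference is that the paper phrases the reinsertion via the \texttt{AddOneGood} subroutine rather than citing Theorem~\ref{thm-resourcemonotone} directly, but the underlying reasoning is identical.
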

\begin{proof}
    Consider an \mwnwtie{} allocation $\mathcal{A}$ for $n+1$ agents. Let $S$ be any subset of $n$ agents, and let $i$ be the agent not in $S$. 
    By Lemma~\ref{lemma-subset-mwnw}, $\mathcal{A}_S$ is an \mwnwtie{} allocation for the agents in $S$. 
    Then, with the bundle of goods remaining to be allocated being that of agent $i$, iteratively make use of the subroutine \texttt{AddOneGood} to allocate all of these goods, giving us an \mwnwtie{} allocation of the original set of goods. 
    Since \texttt{AddOneGood} never decreases the size of any agent's bundle, which is equal to the agent's utility by Lemma~\ref{lemma-mwnw-value}, population-monotonicity is satisfied. \hfill $\square$
\end{proof}

%%% SECTION: STRATEGYPROOFNESS
\section{Group-strategyproofness}\label{sec:strategyproofness}
A central concept in mechanism design is the notion of \textit{strategyproofness} (SP), which is a compelling guarantee to prevent strategic manipulation by agents. 
Strategyproof rules ensure that an agent cannot find himself better off should he lie about his valuation function. 
A more robust version of SP is known as \emph{group-strategyproofness} (GSP), wherein no group of agents can misreport their preferences so as to obtain a strictly higher utility for all agents in that group.

\begin{definition}[Group-strategyproofness]
    An allocation rule $f$ satisfies \emph{group-strategyproofness (GSP)} if there do not exist valuation profiles $\mathbf{v}$ and $\mathbf{v'}$ and a nonempty group of agents $S \subseteq N$ such that 
    \begin{itemize}
        \item $v_{i}(A'_{i}) > v_{i}(A_{i}) \text{ for all } i \in S $, and
        \item $v_j = v'_j$ for all $j \in N \setminus S$,
    \end{itemize}
     where $\mathcal{A} = f(\mathbf{v})$ and $\mathcal{A'} = f(\mathbf{v'})$. 
\end{definition}

Note that GSP reduces to SP if we only consider singleton groups.
Halpern et al.~\cite{halpern2020binary} showed that in the unweighted setting, \mnwtie{} satisfies GSP. We strengthen their result by extending it to the weighted setting.

\begin{theorem}
\label{thm-GSP}
    Under binary valuations, the \emph{\mwnwtie{}} rule is  group-strategyproof.
\end{theorem}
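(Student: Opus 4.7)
The plan is to proceed by contradiction, extending the transformation-graph technique of Theorem~\ref{thm-resourcemonotone} and generalizing the unweighted argument of Halpern et al.~\cite{halpern2020binary} to account for weights. Assume there exist valuation profiles $\mathbf{v}, \mathbf{v'}$ and a nonempty group $S \subseteq N$ with $v_j = v'_j$ for every $j \in N \setminus S$ and $v_i(A'_i) > v_i(A_i)$ for every $i \in S$, where $\mathcal{A}$ and $\mathcal{A'}$ denote the \mwnwtie{} allocations under $\mathbf{v}$ and $\mathbf{v'}$, respectively. By Lemma~\ref{lemma-mwnw-value}, $|A_i| = v_i(A_i)$ and $|A'_i| = v'_i(A'_i)$; since each good contributes at most~$1$ to $v_i(A'_i)$, this immediately yields $|A'_i| \geq v_i(A'_i) \geq v_i(A_i) + 1 = |A_i| + 1$ for every $i \in S$.

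Next, I would establish a sharper bundle-size comparison on the common set of allocated goods $G^* := G_{\mathcal{A}} \cap G_{\mathcal{A'}}$. For every $j \in N \setminus S$, the agreement $v_j = v'_j$ combined with Lemma~\ref{lemma-mwnw-value} and minimal completeness forces $A_j, A'_j \subseteq G^*$. For every $i \in S$, the inequality $v_i(A'_i) > v_i(A_i)$ forces the stronger conclusion $|A'_i \cap G^*| > |A_i \cap G^*|$: any $g \in A'_i$ with $v_i(g) = 1$ lies in $G_{\mathcal{A}}$ by minimal completeness of $\mathcal{A}$ and in $G_{\mathcal{A'}}$ since $g \in A'_i$, hence in $G^*$, and so $|A'_i \cap G^*| \geq v_i(A'_i) > |A_i| \geq |A_i \cap G^*|$.

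With these in hand, I would consider the transformation graph $T = \mathcal{G}(\mathcal{A'}, \mathcal{A})$, whose edges all lie over $G^*$, and remove cycles to obtain a DAG as in Theorem~\ref{thm-resourcemonotone}. A degree-counting argument shows that each $i \in S$ has out-degree strictly greater than in-degree in $T$; hence starting from some $i^* \in S$ and following outgoing edges, we reach a sink $j^* \in N \setminus S$ with $|A_{j^*}| > |A'_{j^*}|$. Moreover, the first outgoing edge from $i^*$ can be chosen to correspond to a good that $i^*$ truly values under $v_{i^*}$, since the number of $v_{i^*}$-valued goods in $A'_{i^*}\setminus A_{i^*}$ is strictly positive. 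Transferring goods along this path in $\mathcal{A}$ produces an allocation $\widehat{\mathcal{A}}$ with $|\widehat{A}_{i^*}| = |A_{i^*}| + 1$ and $|\widehat{A}_{j^*}| = |A_{j^*}| - 1$, while transferring them in reverse in $\mathcal{A'}$ produces $\widehat{\mathcal{A'}}$ with $|\widehat{A'}_{i^*}| = |A'_{i^*}| - 1$ and $|\widehat{A'}_{j^*}| = |A'_{j^*}| + 1$.

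Finally, the \mwnwtie{} optimality of $\mathcal{A}$ under $\mathbf{v}$ and of $\mathcal{A'}$ under $\mathbf{v'}$ yields two opposing ratio inequalities on the weighted utilities of $i^*$ and $j^*$; combining them exactly as in equations~(\ref{eqn-resourcemonotone-case4-1})--(\ref{eqn-resourcemonotone-case4-4}), together with lexicographic tie-breaking constraints at $i^*$ and $j^*$ (forcing both $i^* < j^*$ and $j^* < i^*$), produces the desired contradiction; a separate boundary-case analysis handles empty bundles after the transfer, analogously to Case~1 of Theorem~\ref{thm-resourcemonotone}. The main obstacle will be justifying the existence of a path whose interior vertices all lie in $N \setminus S$ and along which $i^*$ truly values the first transferred good, so that the transfers leave intermediate utilities unchanged in both profiles and yield a clean bundle-size comparison at $i^*$ and $j^*$; this requires combining the strict out-minus-in excess at $S$-nodes in~$T$ with a careful sub-DAG analysis on the restriction to $N \setminus S$.
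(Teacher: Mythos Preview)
Your overall strategy coincides with the paper's: build a transformation graph between the truthful and the manipulated allocation, remove cycles, locate a path between an agent in $S$ and an agent in $N\setminus S$ whose bundle shrank, perform transfers in both directions, and combine the two resulting weighted-ratio inequalities (together with the lexicographic tie-breaking) to obtain a contradiction, with a separate boundary case for empty bundles. That part is essentially identical to the paper's argument.

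The step you yourself flag as the ``main obstacle'' is, however, genuinely the crux, and your proposal does not resolve it. Two concrete problems remain. First, the claim that ``the first outgoing edge from $i^*$ can be chosen to correspond to a good that $i^*$ truly values'' is not justified \emph{after} cycle removal: such an edge certainly exists beforehand, but nothing prevents it from being deleted when cycles are excised. Second, following outgoing edges from $i^*$ may pass through further agents of $S$; for such an intermediate $k\in S$ the reverse transfer in $\mathcal{A'}$ need not preserve $k$'s declared utility under $\mathbf{v'}$ (since the good being handed to $k$ lies in $A_k$, so $v_k$ values it, but $v'_k$ may not), and then the \mwnwtie{} optimality of $\mathcal{A'}$ under $\mathbf{v'}$ gives no usable inequality.

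The paper dispatches both issues with two simple devices that replace your ``careful sub-DAG analysis.'' First, before constructing the graph it replaces each manipulator's bundle $A'_i$ by its truly-valued part $\{g\in A'_i : v_i(g)=1\}$ (the allocation $\mathcal{A}^{\texttt{*lie}}$); every edge of the resulting graph then automatically corresponds to a good that the $S$-endpoint truly values, eliminating your first difficulty outright. Second, instead of starting at an arbitrary $i^*\in S$ and hoping the path avoids $S$, the paper traces the path all the way to the non-$S$ endpoint $j$ and then takes $i$ to be the \emph{first} agent of $S$ encountered along the path from $j$; by construction every vertex strictly between $j$ and $i$ lies in $N\setminus S$, so both the forward transfer (under $\mathbf{v}$) and the reverse transfer (under $\mathbf{v'}$) are valid. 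With these two adjustments in place, the ratio and lexicographic argument you outline goes through exactly as in Theorem~\ref{thm-resourcemonotone}.
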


\begin{proof}
    Let $\mathcal{A}^\texttt{truth}$  be an \mwnwtie{} allocation under a truthful valuation profile~$\mathbf{v}$, with the unallocated (and truthfully unvalued) goods put into $A^\texttt{truth}_\text{unvalued}$.
    
    Similarly, let $\mathcal{A}^\texttt{lie}$ be an \mwnwtie{} allocation under the same truthful valuations for all agents except possibly those in $S \subseteq N$---call this valuation profile~$\mathbf{v}'$---with the unallocated goods put into $A^\texttt{lie}_\text{unvalued}$. Note that $A^\texttt{lie}_\text{unvalued}$ contains goods that agents in $N \setminus S$ truthfully do not value, and that agents in $S$ declare they do not value (even though some of them may actually value some goods here).

        Suppose, for a contradiction, that there exist such valuation profiles $\mathbf{v}$ and~$\mathbf{v'}$ and a set of agents $S \subseteq N$ such that
    \begin{enumerate}[(i)]
        \item $v_{i}(A_{i}^\texttt{lie}) > v_{i}(A_{i}^\texttt{truth}) \text{ for all } i \in S $,
        \item $v_j = v'_j$ for all $j \in N \setminus S$.
    \end{enumerate}      
    Under truthful valuations, Lemma~\ref{lemma-mwnw-value} tells us that every agent values every good in his own bundle. However, because agents in $S$ may not be truthful, they may receive goods they do not actually value. For every $i \in S$, we can write
    \begin{equation*}
        A_i^\texttt{lie} = A_i^\texttt{lie,valued} \cup  A_i^\texttt{lie,unvalued},
    \end{equation*}
    where $A_i^\texttt{lie,valued}$ consists of the goods in $A_i^\texttt{lie}$ that $i$ values, and $A_i^\texttt{lie,unvalued}$ consists of the goods in $A_i^\texttt{lie}$ that $i$ does not value (but declares as valued in~$\mathbf{v}'$). So we have that
    \begin{equation*}
        v_i(A_i^\texttt{lie}) = v_i(A_i^\texttt{lie,valued}).
    \end{equation*}
    Combining this with condition (i), we get that, for all $i \in S$,
    \begin{equation}\label{eqn-lievaluedtruth}
        |A_i^\texttt{lie,valued}| = v_i(A_i^\texttt{lie,valued}) = v_i(A_i^\texttt{lie}) > v_i(A_i^\texttt{truth}) = |A_i^\texttt{truth}|,
    \end{equation}
    where the last equality follows from Lemma \ref{lemma-mwnw-value}. Let $\mathcal{A}^\texttt{*lie}$ be the allocation that results from excluding the set of goods $\bigcup_{i \in S} A_i^\texttt{lie,unvalued}$ from $\mathcal{A}^\texttt{lie}$. 
    Also, let $A_\text{unvalued}^\texttt{*lie} = A_\text{unvalued}^\texttt{lie}$.
    
    Construct the graph $T = \mathcal{G}(\mathcal{A}^\texttt{truth} \cup \{A_\text{unvalued}^\texttt{truth}\},  \mathcal{A}^\texttt{*lie} \cup \{A_\text{unvalued}^\texttt{*lie}\})$, where we use the notation $\mathcal{A}^\texttt{truth} \cup \{A_\text{unvalued}^\texttt{truth}\}$ to mean the allocation $\mathcal{A}^\texttt{truth}$ with the bundle $A_\text{unvalued}^\texttt{truth}$ added to it, and similarly for $\mathcal{A}^\texttt{*lie} \cup \{A_\text{unvalued}^\texttt{*lie}\}$.
    Note that there are $n+1$ nodes in $T$: each of the first $n$ nodes represents an agent in $N$, while the last node represents the dummy agent $\textit{unvalued}$. We make the following claim:
    \begin{description}
        \item[Claim] There are no edges emerging from the dummy agent \textit{unvalued}.
    \end{description}

     To see why the Claim holds, recall that none of the $n$ agents truthfully values any good in $A_\text{unvalued}^\texttt{truth}$. This means that any edge emerging from the node \textit{unvalued} can only be as a result of some agent $i \in S$ misreporting that he values it (when in fact he does not). However, such a good would go into $A_i^\texttt{lie,unvalued}$, which is explicitly excluded in our transformation graph $T$.
     
     The rest of the proof proceeds using similar ideas as the proof of Theorem~\ref{thm-resourcemonotone}. First, we can eliminate all cycles from $T$.
     Let $\ell\in S$.
     Since $|A_\ell^\texttt{lie,valued}| > |A_\ell^\texttt{truth}|$ by (\ref{eqn-lievaluedtruth}), there must be an incoming edge to node~$\ell$, and this edge cannot come from the node \emph{unvalued} by our Claim.
     Since the graph is finite and all cycles have been eliminated, by starting from $\ell$ and iteratively following an incoming edge, we must eventually reach a node~$j$ with no incoming edge.
     In particular, it holds that $|A_j^\texttt{*lie}| < |A_j^\texttt{truth}|$. 
     By condition (i), we have $j\not\in S$, and 
     by the Claim, no node on the path from $j$ to $\ell$ is the node \emph{unvalued}.
     
     Traverse the path from $j$ to $\ell$, and let $i$ be the first agent belonging to $S$. Then, on the path from $j$ to $i$, all agents (except $i$) belong to $N \setminus S$. 
     From $\mathcal{A}^\texttt{truth}$, we can transfer goods along the path from $j$ to $i$; let the allocation after this sequence of transfers be $\widehat{\mathcal{A}}^\texttt{truth}$. 
     Since $\mathcal{A}^\texttt{truth}$ is an \mwnwtie{} allocation, it is preferred to  $\widehat{\mathcal{A}}^\texttt{truth}$ by the \mwnwtie{} rule. 
     Note that $|\widehat{A}^\texttt{truth}_i| = |A^\texttt{truth}_i| + 1$, $|\widehat{A}^\texttt{truth}_j| = |A^\texttt{truth}_j| - 1$, and $|\widehat{A}^\texttt{truth}_k| = |A^\texttt{truth}_k|$ for all $k\in N\setminus\{i,j\}$, and
     observe that every transferred good in $T$ is transferred from an agent who truthfully values it to another agent who truthfully values it.
     Consequently, the weighted utility vector $(|A_i^\texttt{truth}|^{w_i},|A_j^\texttt{truth}|^{w_j})$ is preferred to the weighted utility vector $((|A_i^\texttt{truth}|+1)^{w_i},(|A_j^\texttt{truth}|-1)^{w_j})$ by the \mwnwtie{} rule.\footnote{Even though we write agent~$i$'s weighted utility before agent~$j$'s in the weighted utility vectors, the actual order would be reversed if $j < i$.} 

     Similarly, from $\mathcal{A}^\texttt{*lie}$, we can transfer a good along the reverse path from $i$ to $j$. Observe that in this transfer, every agent values (under $\mathbf{v'}$) the good that he receives, because for every agent $h$ receiving a good along this path, $h  \in N \setminus S$, which implies $v_h = v'_h$ by condition (ii). As a result, this path from $i$ to $j$ also exists when going from $\mathcal{A}^\texttt{lie}$ to $\mathcal{A}^\texttt{truth}$.
     Let the allocation after such a transfer be $\widehat{\mathcal{A}}^\texttt{lie}$. Since $\mathcal{A}^\texttt{lie}$ is an \mwnwtie{} allocation, it is preferred to $\widehat{\mathcal{A}}^\texttt{lie}$ by the \mwnwtie{} rule according to the valuation profile $\mathbf{v'}$. 
     Note that $|\widehat{A}^\texttt{lie}_i| = |A^\texttt{lie}_i| -~1$, $|\widehat{A}^\texttt{lie}_j| = |A^\texttt{lie}_j| + 1$, and $|\widehat{A}^\texttt{lie}_k| = |A^\texttt{lie}_k|$ for all $k\in N\setminus\{i,j\}$.
     Hence, the weighted utility vector $(|A_i^\texttt{lie}|^{w_i},|A_j^\texttt{lie}|^{w_j})$ is preferred to the weighted utility vector $((|A_i^\texttt{lie}|-1)^{w_i},(|A_j^\texttt{lie}|+1)^{w_j})$ by the \mwnwtie{} rule.
     
     Now, recall that we have
     \begin{equation*}
        |A_i^\texttt{lie}| \geq |A_i^\texttt{*lie}| > |A_i^\texttt{truth}| 
        \quad \text{ and } \quad
        |A_j^\texttt{truth}| > |A_j^\texttt{*lie}| = |A_j^\texttt{lie}|,
     \end{equation*}
     where the last equality holds since $j\not\in S$. 
     Since bundle sizes are integers, we get
     \begin{equation}\label{inq-strategyproof-1}
        |A_i^\texttt{lie}| \geq |A_i^\texttt{*lie}| \ge  |A_i^\texttt{truth}|+1 
        \quad \text{ and } \quad
        |A_j^\texttt{truth}|-1 \ge |A_j^\texttt{*lie}| = |A_j^\texttt{lie}|.
     \end{equation}
     In particular, $|A_i^\texttt{lie}| \geq |A_i^\texttt{*lie}| \ge 1$ and $|A_j^\texttt{truth}| \ge 1$.
     We consider the following two cases based on the size of $|A_j^\texttt{truth}|$.
     \begin{description}
         \item[Case 1: $|A_j^\texttt{truth}| = 1$.] Then $|A_j^\texttt{lie}| = 0$. If $|A_i^\texttt{lie}| > 1$, then the allocation $\widehat{\mathcal{A}}^\texttt{lie}$, with weighted utility vector $((|A_i^\texttt{lie}|-1)^{w_i},(|A_j^\texttt{lie}|+1)^{w_j})$ for agents $i$ and $j$, has strictly more agents receiving positive utility, thereby contradicting the fact that $\mathcal{A}^\texttt{lie}$ is an \mwnwtie{} allocation. Thus, $|A_i^\texttt{lie}| = 1$, which by (\ref{inq-strategyproof-1}) means that $|A_i^\texttt{truth}| = 0$. 
         Since $(|A_i^\texttt{lie}|^{w_i},|A_j^\texttt{lie}|^{w_j})$ is preferred to $((|A_i^\texttt{lie}|-~1)^{w_i},(|A_j^\texttt{lie}|+1)^{w_j})$, it must be that $i < j$. However, since the weighted utility vector $(|A_i^\texttt{truth}|^{w_i},|A_j^\texttt{truth}|^{w_j})$ is preferred to $((|A_i^\texttt{truth}|+1)^{w_i},(|A_j^\texttt{truth}|-1)^{w_j})$, we must have $j < i$, a contradiction.
         
         \vspace{3mm}
         
         \item[Case 2: $|A_j^\texttt{truth}| > 1$.] If $|A_i^\texttt{truth}| = 0$, then the allocation $\mathcal{A}^\texttt{truth}$, with weighted utility vector $((|A_i^\texttt{truth}|+1)^{w_i},(|A_j^\texttt{truth}|-1)^{w_j})$ for agents $i$ and $j$, has strictly more agents receiving positive utility, thereby contradicting the fact that $\mathcal{A}^\texttt{truth}$ is an \mwnwtie{} allocation. Thus, $|A_i^\texttt{truth}| >~0$, which by (\ref{inq-strategyproof-1}) means that $|A_i^\texttt{lie}| > 1$. If $|A_j^\texttt{lie}| = 0$, then we arrive at the same contradiction as we did at the beginning of Case 1. Thus, $|A_j^\texttt{lie}| > 0$, which by~(\ref{inq-strategyproof-1}) means that $|A_j^\texttt{truth}| > 1$. Since $(|A_i^\texttt{truth}|^{w_i},|A_j^\texttt{truth}|^{w_j})$ is preferred to  $((|A_i^\texttt{truth}|+1)^{w_i},(|A_j^\texttt{truth}|-1)^{w_j})$, we have
        \begin{equation}\label{eqn-strategyproof-1}
            \frac{|A_i^\texttt{truth}|^{w_i}|A_j^\texttt{truth}|^{w_j}}{(|A_i^\texttt{truth}|+1)^{w_i}(|A_j^\texttt{truth}|-1)^{w_j}} \geq 1,
        \end{equation}
        where the denominator is nonzero as $|A_j^\texttt{truth}| > 1$. 
        From (\ref{inq-strategyproof-1}), we have $|A_i^\texttt{lie}| \ge  |A_i^\texttt{truth}|+1 $ and $|A_j^\texttt{truth}|-1 \ge |A_j^\texttt{lie}|$, which give us
        \begin{equation}
        \label{eqn-strategyproof-temp}
            -\frac{1}{|A_i^\texttt{lie}|} \geq -\frac{1}{|A_i^\texttt{truth}| + 1} \quad \text{ and } \quad \frac{1}{|A_j^\texttt{lie}|} \geq \frac{1}{|A_j^\texttt{truth}| - 1}.
        \end{equation}
        All the denominators in the above inequalities are nonzero as $|A_i^\texttt{lie}| > 1$, $|A_j^\texttt{lie}| > 0$, and $|A_j^\texttt{truth}| > 1$. 
        Also, since $(|A_i^\texttt{lie}|^{w_i},|A_j^\texttt{lie}|^{w_j})$ is preferred to $((|A_i^\texttt{lie}|-1)^{w_i},(|A_j^\texttt{lie}|+1)^{w_j})$, we get 
         \begin{equation}\label{eqn-strategyproof-2}
             \frac{(|A_i^\texttt{lie}|-1)^{w_i}(|A_j^\texttt{lie}|+1)^{w_j}}{|A_i^\texttt{lie}|^{w_i}|A_j^\texttt{lie}|^{w_j}} \leq 1,
        \end{equation}
        where the denominator is nonzero as $|A_i^\texttt{lie}| > 1$ and $|A_j^\texttt{lie}| > 0$. Now,
        \begin{align} \label{eqn-strategyproof-3}       \frac{(|A_i^\texttt{lie}|-1)^{w_i}(|A_j^\texttt{lie}|+1)^{w_j}}{|A_i^\texttt{lie}|^{w_i}|A_j^\texttt{lie}|^{w_j}} & = \left( 1 - \frac{1}{|A_i^\texttt{lie}|} \right)^{w_i} \left( 1 + \frac{1}{|A_j^\texttt{lie}|} \right)^{w_j} \nonumber \\
                & \geq \left( 1 - \frac{1}{|A_i^\texttt{truth}|+1} \right)^{w_i} \left( 1 + \frac{1}{|A_j^\texttt{truth}|-1} \right)^{w_j} \nonumber \\
                & = \frac{|A_i^\texttt{truth}|^{w_i}|A_j^\texttt{truth}|^{w_j}}{(|A_i^\texttt{truth}|+1)^{w_i}(|A_j^\texttt{truth}|-1)^{w_j}}  \\
                & \geq 1, \nonumber
        \end{align}
        where the second line follows from (\ref{eqn-strategyproof-temp}) and the last line is as a result of (\ref{eqn-strategyproof-1}). 
        Since \mwnwtie{} chooses a lexicographically dominating allocation among all MWNW allocations, (\ref{eqn-strategyproof-1}) can be an equality only if $j < i$, and (\ref{eqn-strategyproof-2}) can be an equality only if $i < j$.
        Combining (\ref{eqn-strategyproof-1}), (\ref{eqn-strategyproof-2}), and (\ref{eqn-strategyproof-3}) yields 
        
        \begin{align*}
            1 &\geq \frac{(|A_i^\texttt{lie}|-1)^{w_i}(|A_j^\texttt{lie}|+1)^{w_j}}{|A_i^\texttt{lie}|^{w_i}|A_j^\texttt{lie}|^{w_j}} \geq \frac{|A_i^\texttt{truth}|^{w_i}|A_j^\texttt{truth}|^{w_j}}{(|A_i^\texttt{truth}|+1)^{w_i}(|A_j^\texttt{truth}|-1)^{w_j}} \geq 1,
        \end{align*}
        where either the leftmost or the rightmost inequality has to be strict, thereby giving us a contradiction.
    \end{description}
    In both cases, we have arrived at a contradiction, completing the proof. \hfill $\square$ 
\end{proof}

An even stronger version of GSP, \textit{strong GSP}, states that no group of agents can misreport their preferences so as to obtain a strictly higher utility for \textit{some} of the agents in that group without hurting the other members of the group.

\begin{definition}[Strong group-strategyproofness]
    An allocation rule $f$ satisfies \emph{strong group-strategyproofness (strong GSP)} if there do not exist valuation profiles $\mathbf{v}$ and $\mathbf{v'}$ and a nonempty group of agents $S \subseteq N$ such that for some nonempty $C\subseteq S$,
    \begin{itemize}
        \item $v_{i}(A'_{i}) > v_{i}(A_{i}) \text{ for all } i \in C $,
        \item $v_{k}(A'_{k}) = v_{k}(A_{k}) \text{ for all } k \in S \setminus C$, and
        \item $v_j = v'_j$ for all $j \in N \setminus S$,
    \end{itemize}
    where $\mathcal{A} = f(\mathbf{v})$ and $\mathcal{A'} = f(\mathbf{v'})$. 
\end{definition}

If the first condition is altered to hold for all $i \in S$ and the second condition is removed, we obtain the standard version of GSP as introduced previously. Like GSP, strong GSP reduces to SP if we only consider singleton groups.
We complement Theorem~\ref{thm-GSP} by showing that \mwnwtie{} fails strong GSP even in the unweighted setting (in which case it reduces to \mnwtie{}).

\begin{proposition}
\label{prop:strong-GSP}
    In the unweighted setting, \emph{\mnwtie{}} does not satisfy strong group-strategyproofness even under binary valuations.
\end{proposition}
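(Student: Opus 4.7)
My plan is to exhibit a small counterexample with three equally-weighted agents and two goods that exploits the lexicographic tie-breaking step in \mnwtie{}. The intuition is that lexicographic tie-breaking favors lower-indexed agents, so an agent who is indifferent between two utility-tied MNW outcomes can misreport in a way that the new lex-dominant allocation shifts positive utility onto a previously neglected higher-indexed agent without hurting herself, yielding a violating coalition of size two.

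Concretely, I take $N=\{1,2,3\}$ and $G=\{g_1,g_2\}$ with $v_1(g_1)=v_1(g_2)=1$, $v_2(g_1)=1$, $v_2(g_2)=0$, $v_3(g_1)=0$, and $v_3(g_2)=1$. By Lemma~\ref{lemma-mwnw-value}, $g_1$ must go to agent~$1$ or~$2$ and $g_2$ must go to agent~$1$ or~$3$; since there are only two goods, at most two agents can receive positive utility. The three minimally complete allocations achieving this maximum have utility vectors $(1,1,0)$, $(1,0,1)$, and $(0,1,1)$, all with unit Nash product, so lexicographic comparison selects $\mathcal{A}=(\{g_2\},\{g_1\},\emptyset)$ with utility vector $(1,1,0)$; in particular, agent~$3$ receives utility~$0$.

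Next, I have agent~$1$ misreport $v'_1(g_1)=1$, $v'_1(g_2)=0$. Under $\mathbf{v}'$, only agent~$3$ values $g_2$, so Lemma~\ref{lemma-mwnw-value} forces $g_2\to 3$; the two remaining candidate allocations have utility vectors $(1,0,1)$ and $(0,1,1)$, and lex-domination picks $\mathcal{A}'=(\{g_1\},\emptyset,\{g_2\})$. Under the true valuations, agent~$1$'s utility remains $1$ and agent~$3$'s utility jumps from $0$ to $1$, while agent~$2$ (outside the coalition) reports truthfully throughout. Taking $S=\{1,3\}$ and $C=\{3\}$ satisfies every requirement in the definition of strong GSP, yielding the desired violation. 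The only nontrivial step is the two allocation enumerations, but these are routine thanks to Lemma~\ref{lemma-mwnw-value} and the small instance size.
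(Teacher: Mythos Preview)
Your proof is correct. Both your argument and the paper's proceed by exhibiting a concrete three-agent counterexample that exploits the lexicographic tie-breaking layer of \mnwtie{}, so the overall strategy is the same.

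The constructions differ, however, and yours is arguably more economical. The paper uses four goods and the coalition $S=\{2,3\}$: under truth the lex-dominant MNW allocation has utility vector $(2,1,1)$, and after both agents in $S$ misreport, the outcome shifts to $(1,1,2)$, leaving agent~$2$ unchanged and strictly helping agent~$3$. Your example achieves the same phenomenon with only two goods and a single misreporting agent (agent~$1$) who steers the lex tie-break from $(1,1,0)$ to the outcome $(1,0,1)$ while keeping her own utility at~$1$. The paper itself notes in a footnote that one coalition member need not lie, which is exactly the mechanism you isolate. Your smaller instance makes the enumeration of MNW allocations essentially trivial and highlights more directly that the violation stems purely from the lexicographic layer; the paper's larger instance, on the other hand, shows that the failure persists even when all agents receive positive utility under truth (so the zero-utility boundary is not essential to the counterexample).
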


\begin{proof}
    Consider an unweighted instance with three agents $N=\{1,2,3\}$ and four goods $G=\{g_1,g_2,g_3,g_4\}$, with agents' valuation profile $\mathbf{v}$:

    \begin{center}
        \begin{tabular}{ c | c c c c }
         $\mathbf{v}$ & $g_1$ & $g_2$ & $g_3$ & $g_4$ \\ 
         \hline
         $1$ & \circled{1} & \circled{1} & 0 & 0 \\  
         $2$ & 0 & 1 & \circled{1} & 0 \\
         $3$ & 0 & 0 & 1 & \circled{1} \\
        \end{tabular}
    \end{center}
    
    The \mnwtie{} rule returns the circled allocation $A_1 = \{g_1, g_2\}$, $A_2 = \{g_3\}$, $A_3 = \{g_4\}$ (recall that the rule break ties lexicographically), and the bundle values are $v_1(A_1) = 2$ and $v_2(A_2) = v_3(A_3) = 1$.
    Now, suppose agents $2$ and $3$ form a manipulating coalition and misreport their values for goods $g_3$ and $g_2$, respectively. Denote this new valuation profile by $\mathbf{v'}$: 
    
    \begin{center}
        \begin{tabular}{ c | c c c c }
         $\mathbf{v'}$ & $g_1$ & $g_2$ & $g_3$ & $g_4$ \\ 
         \hline
         $1$ & \circled{1} & 1 & 0 & 0 \\  
         $2$ & 0 & \circled{1} & \textbf{0} & 0 \\
         $3$ & 0 & \textbf{1} & \circled{1} & \circled{1} \\
        \end{tabular}
    \end{center}
    Now, the \mnwtie{} rule returns the circled allocation $A_1 = \{g_1\}, A_2 = \{g_2\}, A_3 = \{g_3,g_4\}$, and the bundle values are $v_1(A_1) = v_2(A_2) = 1$ and $v_3(A_3) = 2$.  In this case, the misreporting of preferences by the group of agents $\{2,3\}$ led to a strictly higher utility for agent~$3$ and the same utility for agent~$2$, hence violating strong group-strategyproofness.\footnote{Note that in this example, even if agent~$3$ does not lie (i.e., $v'_3 = v_3$), we arrive at the same outcome.}
    $\hfill \square$
\end{proof}

%%% SECTION: CONCLUSION
\section{Conclusion and Future Work}\label{sec:conclusion}

In this work, we show that the maximum weighted Nash welfare rule with lexicographic tie-breaking, \mwnwtie{}, exhibits several desirable properties when agents have binary valuations.
In particular, the rule satisfies resource- and population-monotonicity as well as group-strategyproofness, and can be implemented in polynomial time.
Together with previous results on its fairness and efficiency \cite{chakraborty2020wef}, our results thus indicate that \mwnwtie{} is an attractive rule in the binary setting with arbitrary entitlements.

While prior work and ours have demonstrated a number of features of \mwnwtie{}, it is still an open question whether \mwnwtie{} (or its variants based on different tie-breaking of MWNW) is the unique rule that offers these features.
An axiomatic characterization would further strengthen the case for \mwnwtie{}---such a characterization is not known even for \mnwtie{} in the unweighted setting as far as we are aware.
Nevertheless, we briefly discuss how other candidate rules fall short in comparison to \mwnwtie{}.
\begin{itemize}
\item One possible rule is a version of \emph{serial dictatorship}, where the first agent picks all goods that she values, then the second agent, and so on. While it is not difficult to check that this rule is resource- and population-monotone, group-strategyproof, and Pareto-optimal, it can produce an extremely unfair outcome.
Indeed, if all agents value every good and have equal weights, then the rule allocates all goods to the first agent, even though it is clear that a fair allocation should spread out the goods equally in this case.
\item In order to make the picks more balanced, a prominent class of rule to use is the class of \emph{picking sequences} \cite{chakraborty2020wef,chakraborty2021picking}.
In the unweighted setting, a well-known picking sequence is the \emph{round-robin algorithm}, wherein the agents take turns picking their favorite goods in the order $1,2,\dots,n,1,2,\dots,n,1,2,\dots$ until the goods run out (we assume that agents break ties in favor of lower-indexed goods).
However, a round-robin allocation is not necessarily Pareto-optimal, as seen in the following example:
    \begin{center}
        \begin{tabular}{ c | c c }
         $\mathbf{v}$ & $g_1$ & $g_2$ \\ 
         \hline
         $1$ & \circled{1} & 1 \\  
         $2$ & 1 & \circled{0} \\
        \end{tabular}
    \end{center}
The returned (circled) allocation $A_1 = \{g_1\}, A_2 = \{g_2\}$ is Pareto-dominated by the allocation $A_1 = \{g_2\}, A_2 = \{g_1\}$.

Moreover, the round-robin algorithm fails (individual) strategyproofness.
To see this, consider the following instance:
    \begin{center}
        \begin{tabular}{ c | c c c c c c }
         $\mathbf{v}$ & $g_1$ & $g_2$ & $g_3$ & $g_4$ & $g_5$ & $g_6$ \\
         \hline
         $1$ & \circled{1} & \circled{1} & 1 & 1 & \circled{0} & 0 \\  
         $2$ & 0 & 0 & \circled{1} & \circled{1} & 1 & \circled{1} \\
        \end{tabular}
    \end{center}
According to the algorithm, agent~$1$ picks $g_1$, agent~$2$ picks $g_3$, agent~$1$ picks $g_2$, agent~$2$ picks $g_4$, agent~$1$ picks $g_5$, and agent~$2$ picks $g_6$, resulting in the circled allocation.
Yet, agent~$1$ can benefit from the following manipulation:
    \begin{center}
        \begin{tabular}{ c | c c c c c c }
         $\mathbf{v'}$ & $g_1$ & $g_2$ & $g_3$ & $g_4$ & $g_5$ & $g_6$ \\
         \hline
         $1$ & \circled{1} & \circled{\textbf{0}} & 1 & \circled{1} & 0 & 0 \\  
         $2$ & 0 & 0 & \circled{1} & 1 & \circled{1} & \circled{1} \\
        \end{tabular}
    \end{center}
This time, agent~$1$ picks $g_1$, agent~$2$ picks $g_3$, agent~$1$ picks $g_4$, agent~$2$ picks $g_5$, agent~$1$ picks $g_2$, and agent~$2$ picks $g_6$.
As a result, agent~$1$'s utility with respect to her true valuation function increases from $2$ to $3$.
\item Instead of letting the agents pick, one could take a global view and try to maximize a certain welfare notion, as MWNW does.
In the unweighted setting, a commonly used notion is \emph{utilitarian welfare}, i.e., the sum of the agents' utilities.
However, with binary valuations, optimizing the utilitarian welfare can result in varying degrees of fairness.
For instance, when every good is valued by all agents, every allocation maximizes the utilitarian welfare (which is always $m$, the number of goods), but an allocation that gives all goods to the same agent is clearly unfair.
\item A reasonable welfare notion for achieving fairness is \emph{egalitarian welfare}.
Since there are often several allocations that maximize the egalitarian welfare, a typical refinement is the \emph{leximin} rule \cite{moulin2003fairdivision}, which maximizes the smallest utility, then the second smallest utility, and so on.
Halpern et al.~\cite{halpern2020binary} showed that with equal weights, the set of leximin allocations coincides with that of MNW allocations, so the known properties of MNW also apply to leximin (assuming that ties are broken according to \mnwtie{}).
A natural extension of leximin to the weighted setting is to optimize the weighted utilities\footnote{For example, if all $w_i$'s are integers, there are $w_1 + \dots + w_n$ goods, and each agent has value $1$ for each good, then like MWNW, this extension ensures that agent~$i$ receives $w_i$ goods, which is intuitively the fair allocation for this instance.} $v_i(A_i)/w_i$. 
However, as Chakraborty et al.~\cite{chakraborty2022weighted} noted, the resulting rule exhibits problematic behavior even in the simple case of two agents and one valuable good: since the minimum utility is always $0$, it allocates the good to the agent with a smaller weight, as this makes the ratio $v_i(A_i)/w_i$ larger.
More generally, if there are $n$ agents and $m < n$ goods valued by all agents, weighted leximin allocates one good each to the $m$ agents with the smallest weights.
\end{itemize}

We end this paper with some additional directions for future work.
\begin{itemize}
\item As we discuss in Section~\ref{sec:intro}, the advantages of \mwnwtie{} that we show in this paper cease to exist for agents with general additive valuations, even when the agents have equal entitlements.
It would therefore be interesting to determine whether there are intermediate valuation classes between binary and general additive for which some of these advantages can be recovered.
For example, Akrami et al.~\cite{akrami2022maximizing} recently considered \emph{$2$-valued} instances, wherein there exist positive integers $p \le q$ such that $v_i(g)\in\{p,q\}$ for all $i\in N$ and $g\in G$, and showed that maximizing the Nash welfare (in the unweighted setting) is computationally tractable if $p$ divides $q$ but intractable otherwise.
\item Since \mwnwtie{} fails strong group-strategyproofness even in the unweighted setting, it is natural to ask whether there are rules that simultaneously satisfy strong group-strategyproofness and fairness/monotonicity properties.
Note that satisfying strong group-strategyproofness alone is trivial, for example, by ignoring the agents' preferences and choosing a fixed allocation.
\end{itemize}

\subsubsection*{Acknowledgments}
This work was partially supported by an NUS Start-up Grant.
We thank the anonymous reviewers for their valuable comments.

\bibliographystyle{splncs04}
\bibliography{abb,nic}

\begin{thebibliography}{10}
\providecommand{\url}[1]{\texttt{#1}}
\providecommand{\urlprefix}{URL }
\providecommand{\doi}[1]{https://doi.org/#1}

\bibitem{akrami2022maximizing}
Akrami, H., Chaudhury, B.R., Hoefer, M., Mehlhorn, K., Schmalhofer, M.,
  Shahkarami, G., Varricchio, G., Vermande, Q., van Wijland, E.: Maximizing
  {N}ash social welfare in 2-value instances. In: Proceedings of the 36th AAAI
  Conference on Artificial Intelligence (AAAI) (2022), forthcoming

\bibitem{aleksandrov2015onlinefoodbank}
Aleksandrov, M., Aziz, H., Gaspers, S., Walsh, T.: Online fair division:
  Analysing a food bank problem. In: Proceedings of the 24th International
  Joint Conference on Artificial Intelligence (IJCAI). pp. 2540--2546 (2015)

\bibitem{amanatidis2021mnwefx}
Amanatidis, G., Birmpas, G., Filos-Ratsikas, A., Hollender, A., Voudouris,
  A.A.: Maximum {N}ash welfare and other stories about {EFX}. Theoretical
  Computer Science  \textbf{863},  69--85 (2021)

\bibitem{babaioff2021fairshare}
Babaioff, M., Ezra, T., Feige, U.: Fair-share allocations for agents with
  arbitrary entitlements. In: Proceedings of the 22nd ACM Conference on
  Economics and Computation (EC). p.~127 (2021)

\bibitem{babaioff2021competitive}
Babaioff, M., Nisan, N., Talgam-Cohen, I.: Competitive equilibrium with
  indivisible goods and generic budgets. Mathematics of Operations Research
  \textbf{46}(1),  382--403 (2021)

\bibitem{balinski2001fair}
Balinski, M.L., Young, H.P.: Fair Representation: Meeting the Ideal of One Man,
  One Vote. Brookings Institution Press (2001)

\bibitem{barman2018greedymnw}
Barman, S., Krishnamurthy, S.K., Vaish, R.: Greedy algorithms for maximizing
  {N}ash social welfare. In: Proceedings of the 17th International Conference
  on Autonomous Agents and Multi-Agent Systems (AAMAS). pp. 7--13 (2018)

\bibitem{bouveret2016conflict}
Bouveret, S., Lemaître, M.: Characterizing conflicts in fair division of
  indivisible goods using a scale of criteria. Autonomous Agents and
  Multi-Agent Systems  \textbf{30}(2),  259--290 (2016)

\bibitem{brams2007approvalvoting}
Brams, S.J., Fishburn, P.C.: Approval Voting. Springer (2007)

\bibitem{brams1996fairdivision}
Brams, S.J., Taylor, A.D.: Fair Division: From Cake-Cutting to Dispute
  Resolution. Cambridge University Press (1996)

\bibitem{brill2020approval}
Brill, M., G\"{o}lz, P., Peters, D., Schmidt-Kraepelin, U., Wilker, K.:
  Approval-based apportionment. In: Proceedings of the 34th AAAI Conference on
  Artificial Intelligence (AAAI). pp. 1854--1861 (2020)

\bibitem{caragiannis2016mnw}
Caragiannis, I., Kurokawa, D., Moulin, H., Procaccia, A.D., Shah, N., Wang, J.:
  The unreasonable fairness of maximum {N}ash welfare. ACM Transactions on
  Economics and Computation  \textbf{7}(3),  12:1--12:32 (2019)

\bibitem{chakraborty2020wef}
Chakraborty, M., Igarashi, A., Suksompong, W., Zick, Y.: Weighted envy-freeness
  in indivisible item allocation. ACM Transactions on Economics and Computation
   \textbf{9}(3),  18:1--18:39 (2021)

\bibitem{chakraborty2021picking}
Chakraborty, M., Schmidt-Kraepelin, U., Suksompong, W.: Picking sequences and
  monotonicity in weighted fair division. Artificial Intelligence
  \textbf{301},  103578 (2021)

\bibitem{chakraborty2022weighted}
Chakraborty, M., Segal-Halevi, E., Suksompong, W.: Weighted fairness notions
  for indivisible items revisited. In: Proceedings of the 36th AAAI Conference
  on Artificial Intelligence (AAAI) (2022), forthcoming

\bibitem{cormen2009introduction}
Cormen, T.H., Leiserson, C.E., Rivest, R.L., Stein, C.: Introduction to
  Algorithms. {MIT} {P}ress, 3rd edn. (2009)

\bibitem{darmann2015nashproduct}
Darmann, A., Schauer, J.: Maximizing {N}ash product social welfare in
  allocating indivisible goods. European Journal of Operational Research
  \textbf{247}(2),  548--559 (2015)

\bibitem{farhadi2019wmms}
Farhadi, A., Ghodsi, M., Hajiaghayi, M., Lahaie, S., Pennock, D., Seddighin,
  M., Seddighin, S., Yami, H.: Fair allocation of indivisible goods to
  asymmetric agents. Journal of Artificial Intelligence Research  \textbf{64},
  1--20 (2019)

\bibitem{freeman2019equitable}
Freeman, R., Sikdar, S., Vaish, R., Xia, L.: Equitable allocations of
  indivisible goods. In: Proceedings of the 28th International Joint Conference
  on Artificial Intelligence (IJCAI). pp. 280--286 (2019)

\bibitem{halpern2020binary}
Halpern, D., Procaccia, A.D., Psomas, A., Shah, N.: Fair division with binary
  valuations: One rule to rule them all. In: Proceedings of the 16th Conference
  on Web and Internet Economics (WINE). pp. 370--383 (2020)

\bibitem{hosseini2020infowithholding}
Hosseini, H., Sikdar, S., Vaish, R., Wang, H., Xia, L.: Fair division through
  information withholding. In: Proceedings of the 34th AAAI Conference on
  Artificial Intelligence (AAAI). pp. 2014--2021 (2020)

\bibitem{kilgour2010approval}
Kilgour, D.M.: Approval balloting for multi-winner elections. In: Laslier,
  J.F., Sanver, M.R. (eds.) Handbook on Approval Voting, chap.~6, pp. 105--124.
  Springer (2010)

\bibitem{klaus2002strategy}
Klaus, B., Miyagawa, E.: Strategy-proofness, solidarity, and consistency for
  multiple assignment problems. International Journal of Game Theory
  \textbf{30}(3),  421--435 (2002)

\bibitem{kyropoulou2019groupallocation}
Kyropoulou, M., Suksompong, W., Voudouris, A.A.: Almost envy-freeness in group
  resource allocation. Theoretical Computer Science  \textbf{841},  110--123
  (2020)

\bibitem{lee2017apx}
Lee, E.: {APX}-hardness of maximizing {N}ash social welfare with indivisible
  items. Information Processing Letters  \textbf{122},  17--20 (2017)

\bibitem{moulin2003fairdivision}
Moulin, H.: Fair Division and Collective Welfare. MIT Press (2003)

\bibitem{pukelsheim2014proportional}
Pukelsheim, F.: Proportional Representation: Apportionment Methods and Their
  Applications. Springer (2014)

\bibitem{scarlett2021fairness}
Scarlett, J., Teh, N., Zick, Y.: For one and all: Individual and group fairness
  in the allocation of indivisible goods. In: Proceedings of the 8th
  International Workshop on Computational Social Choice (COMSOC) (2021)

\bibitem{segal2017monotonicity}
Segal-Halevi, E., Sziklai, B.: Resource-monotonicity and
  population-monotonicity in connected cake-cutting. Mathematical Social
  Sciences  \textbf{95},  19--30 (2017)

\bibitem{segal2019monotonicity}
Segal-Halevi, E., Sziklai, B.: Monotonicity and competitive equilibrium in
  cake-cutting. Economic Theory  \textbf{68}(2),  363--401 (2019)

\end{thebibliography}

\end{document}